\documentclass[conference]{IEEEtran}

\usepackage{amsmath,amssymb,cite}
\usepackage{colordvi}
\usepackage[colorlinks=true]{hyperref} 
\usepackage{color}
\usepackage{paralist}
\definecolor{rltblue}{rgb}{0,0,0.75}
\usepackage{graphicx}
\newtheorem{lemma}{\textbf{Lemma}}
\newtheorem{theorem}[lemma]{\textbf{Theorem}}
\newtheorem{proposition}[lemma]{Proposition}

\newtheorem{corollary}[lemma]{Corollary}

\newtheorem{remark}{Remark}
\DeclareMathOperator{\ord}{ord}

\begin{document}

\title{ The Permutation Groups and the Equivalence of Cyclic and
  Quasi-Cyclic Codes}
\hyphenation{op-tical net-works semi-conduc-tor}
\author{ Kenza Guenda
\thanks{K. Guenda is with the Faculty of Mathematics USTHB, University
  of Science and Technology of Algiers, Algeria  e-mail: kguenda@gmail.com}}
  
\maketitle
\begin{abstract}
We give the class of finite groups which arise as the permutation
groups of cyclic codes over finite fields.  Furthermore,  we extend the results of Brand and Huffman et al.  and we find the properties of the set  of permutations by
which two cyclic codes of length $p^r$ can be equivalent. We also find
the set of permutations by which two quasi-cyclic codes can be equivalent.
\end{abstract}
\begin{IEEEkeywords}
Permutation group, equivalency of codes, cyclic code, quasi-cyclic code, doubly transitive groups.
\end{IEEEkeywords}
\IEEEpeerreviewmaketitle
\section{Introduction}
A class of cyclic objects on $n$ elements is a class of combinatorial
objects on $n$ elements, where automorphisms of objects in the class
and isomorphisms between objects in the same class are permutations of $S_n$,
and where permutation group contains a complete cycle. Such classes include circulant graphs, circulant
digraphs, cyclic designs and cyclic codes. N. Brand~\cite{brand}
characterised the set $H(P)$ of permutations by which two combinatorial cyclic objects on $p^r$ elements are equivalent. By using these results Huffman et al.~\cite{vanessa} gave explicitly this
set in the case $n=p^2$ and construct algorithms to find the
equivalency between cyclic objects and extended cyclic objects.
In this paper we also give explicitly the set $H(P)$ for
codes of length $p^r$. We simplify the algorithms of Huffman et al. by
proving some results on the order of some subgroups of the
permutation group $Per(C)$.

 It is well known that we can construct from the cyclic codes many optimal
codes with permutation groups sharing many
properties~\cite{conway} or \cite{macwilliams}. With this motivation and also since the set $H(P)$ depends essentially on
the structure of the group $Per(C)$,
we give the class of finite groups which arise as the
permutation groups of cyclic codes.  Note that the
permutation groups of cyclic codes are known only for few families,
such as the Reed--Solomon codes, the Reed--Muller codes and some $BCH$
codes~\cite{berger91,berger96b,berger99}. Recently,
R. Bienert and B. Klopsch~\cite{rolf} studied the permutation group
of cyclic codes in the binary case.
We generalize a result
of~\cite{rolf} concerning the doubly transitive permutation groups with socle $PSL(d,q)$
to the non-binary cases. Furthermore, we prove that if the length
is a composed or a prime power number, then $Per(C)$ is imprimitive
or doubly transitive. Hence, we use the classification of the doubly
primitive groups which contains a complete cycle, given by
J. P. McSorley~\cite{sorley,mortimer} and our previous results to give the
permutation groups in the doubly transitive cases. Further, we consider
the permutation groups of cyclic $MDS$ codes, and that by building on
the results of~\cite{berger93}. Finally, we
consider the quasi-cyclic codes.  These codes are interesting;  they are used in many powerful cryptosystems~\cite{otmani}.
We characterize the set $H'(P)$ of permutations by which
two quasi-cyclic codes can be equivalent. We find some of its properties. But, we did not prove that $H'(P)$ is a group. Even though, by using the
software GAP, we find on several examples that $H'(P)$ is an imprimitive
group. Hence we conjecture that $H'(P)$ is a group. Under this
hypothesis, we prove that $H'(P)$ is an imprimitive group, or the alternating group
$Alt(n)$ or the symmetric group $S_n$. The last situation implies that
the code is trivial. 
 
This paper is organised as follow. In second section we deal with
the permutation groups of cyclic codes. In the third section we
consider the equivalency problem for the cyclic codes. We simplify and
generalize some results of Huffman et al.~\cite{vanessa} from the
length $p^2$ to the length
$p^r$. We
characterize the structure of $H(P)$ and in some cases we give exactly
the set $H(P)$ or prove that it is a group. Finally in the last
section we consider the equivalency problem for the quasi-cyclic codes. 
\section{Permutation Groups of Cyclic Codes }
Let $C$ be a linear code of length $n$ over a finite field $\mathbb{F}_q$, and $\sigma$ a permutation of $S_n$. To the code $C$ we associate a linear code $\sigma(C)$ defined by: 
$$\sigma(C)=\{ (x_{\sigma ^{-1}(1)}, \ldots, x_{\sigma ^{-1}(n) } )\, |\,
  (x_1, \ldots x_n)\in C \}.$$
We say that the codes $C$ and $C'$ are equivalent by permutation
  if there exists a permutation $\sigma \in S_n$ such that $C'=
  \sigma(C)$. 
The permutation group of $C$ is the subgroup of $S_n$ given by:
$$ Per(C)=\{\sigma \in S_n \,|\, \sigma(C)=C\}.$$
 
We recall that a linear code $C$ over $\mathbb{F}_q$ is 
cyclic if it verifies $T \in Per(C)$, where $T$ is a complete cycle of length $n$. 

The following elementary Lemma is a folklore, well-known in the area of group theory.
\begin{lemma}
\label{lem:bur2}
Let $C$ be a cyclic code of  length $n$. Then its permutation group $
Per(C)$ is a transitive group.
\end{lemma}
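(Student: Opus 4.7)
The plan is to invoke directly the definition: $C$ is cyclic means $T\in Per(C)$ for some complete $n$-cycle $T$, and any subgroup of $S_n$ containing a complete cycle is already transitive, so there is nothing further to do.

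More concretely, I would first recall that $Per(C)$ is a subgroup of $S_n$ (the identity fixes $C$, and if $\sigma,\tau\in Per(C)$ then $\sigma\tau^{-1}(C)=C$). Next, since $T\in Per(C)$, the cyclic subgroup $\langle T\rangle$ is contained in $Per(C)$. After a relabeling we may take $T=(1\,2\,\cdots\,n)$, so that $T^{k}(i)=i+k \pmod n$; in particular, for arbitrary $i,j\in\{1,\dots,n\}$, the element $T^{j-i}\in\langle T\rangle\subseteq Per(C)$ sends $i$ to $j$. Hence $Per(C)$ acts transitively on $\{1,\dots,n\}$.

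There is essentially no obstacle here, since transitivity is immediate from the presence of a single $n$-cycle; the statement really amounts to observing that $\langle T\rangle$ is itself transitive. The only pedantic point worth mentioning in the write-up is that $Per(C)$ is indeed closed under composition and inverses (so that $\langle T\rangle\subseteq Per(C)$ is a legal inclusion), but this is an instance of the standard fact that the stabiliser of any subset under a group action is a subgroup.
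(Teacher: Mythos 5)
Your proof is correct and follows essentially the same route as the paper: both observe that $T\in Per(C)$ and that $T^{j-i}$ sends $i$ to $j$, so the cyclic group generated by the shift already acts transitively. The extra remark that $Per(C)$ is a subgroup of $S_n$ is a harmless (and accurate) bit of bookkeeping the paper leaves implicit.
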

\begin{proof}
The group $Per(C)$ operates on the set $\{1,\ldots,n\}$ and contains the permutation
shift $T$. For $i,j \in \{1,\ldots,n\}$, we have $T^{j-i}(i)=j $;
hence $Per(C)$ is transitive.
\end{proof}

A transitive group is either primitive or imprimitive. An interesting
class of primitive group is the class of the doubly transitive
groups. A doubly transitive permutation group $G$ has a unique minimal
normal subgroup $N$, which is either regular and elementary abelian or
simple and primitive and has $C_G(N)=1$~\cite[ p. 202]{burnside}. All simple groups which can occur as minimal normal subgroup
of a doubly transitive group are known~\cite{cameron}. This result is
due to the classification of finite simple
groups. By using this classification J. P. McSorley~\cite{sorley} gave the following result. 
 \begin{lemma}
\label{lem:sorley}
 A group $G$ of degree $n$ which is doubly transitive and contains a complete
cycle, has as socle $N$, verifies
$N \leq G \leq Aut(N)$ and is equal to one of the cases in the
following Table.
\end{lemma}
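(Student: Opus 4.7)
The plan is to combine the classification of finite doubly transitive permutation groups, a consequence of the classification of finite simple groups, with a case-by-case verification of which such groups contain a full $n$-cycle. First I would invoke the structural fact already recalled above: $G$ has a unique minimal normal subgroup $N$ which is either elementary abelian and regular, or nonabelian simple and primitive with $C_G(N)=1$. In the almost-simple case, $C_G(N)=1$ together with $N \trianglelefteq G$ forces $G$ to embed into $\mathrm{Aut}(N)$, yielding the inclusion $N \leq G \leq \mathrm{Aut}(N)$ claimed in the statement; in the affine case the same chain is interpreted with $N$ its own socle.

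Next I would quote the CFSG-based enumeration of doubly transitive finite groups, which restricts the possibilities for the socle $N$: on the affine side $N = (\mathbb{Z}/p\mathbb{Z})^d$ with $n = p^d$, and on the almost-simple side $N$ is one of $A_n$, $PSL(d,q)$, $PSU(3,q)$, $Sz(q)$, ${}^2G_2(q)$, $Sp(2d,2)$, a Mathieu group, or one of a handful of small sporadic exceptions such as $HS$ and $Co_3$, each acting on its natural doubly transitive coset space, with $n$ therefore determined by $N$ and the action.

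The heart of the proof is the filter imposed by the hypothesis that $G$ contains a complete $n$-cycle, i.e.\ an element of order $n$ whose action on the $n$ points is a single cycle. For the projective socle $PSL(d,q)$ this is realised by a Singer cycle of $GL(d,q)$, projected to $PGL(d,q)$ and acting on $PG(d-1,q)$; the question reduces to which intermediate groups $PSL(d,q) \leq G \leq P\Gamma L(d,q)$ contain such an element, and here the generalisation of the Bienert--Klopsch result announced in the introduction is used to pin down the admissible field automorphisms. For $A_n$ and $S_n$ the $n$-cycle is transparent, for the affine socle any primitive element of $\mathbb{F}_{p^d}^\times$ combined with a translation supplies a cycle, and for the Mathieu degrees $11, 12, 23, 24$ the existence of an $n$-cycle is read off directly from the element-order spectrum.

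The main obstacle is the negative side of this filter: showing that the families $PSU(3,q)$, $Sz(q)$, ${}^2G_2(q)$, $Sp(2d,2)$ (for $d\geq 2$), and the sporadic entries $HS$, $Co_3$, $M_{22}$ contain no element of the required order $n$, and hence no $n$-cycle. I would carry this out by comparing the doubly transitive degree with the maximal element orders, which for the groups of Lie type are accessible from the structure of maximal tori, and for the sporadic groups from their character tables; in every case either the maximal element order is too small or an element of order $n$ exists but has the wrong cycle type. Collecting the surviving triples $(N,G,n)$ then produces the table announced in the lemma.
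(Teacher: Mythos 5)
The paper itself offers no proof of this lemma: it is quoted verbatim from McSorley's classification, which is obtained exactly along the lines you describe (CFSG list of doubly transitive socles, then a filter by the existence of an $n$-cycle). So your overall strategy is the right one and matches the source. However, your sketch contains a genuine error in the affine branch. You assert that for an affine socle ``any primitive element of $\mathbb{F}_{p^d}^\times$ combined with a translation supplies a cycle.'' A generator of $\mathbb{F}_{p^d}^\times$ is a $(p^d-1)$-cycle on the nonzero vectors that fixes $0$; it is not a complete cycle, and no product with a translation repairs this. A complete cycle on $p^d$ points is an element of order $p^d$, hence lies in a Sylow $p$-subgroup of $A\Gamma L(d,p)$, i.e.\ is a translation composed with a unipotent map $u$. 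Since $(u-1)^d=0$, one computes that such an element has order at most $p^{1+\lceil \log_p d\rceil}$, and a short calculation shows this equals $p^d$ with regular action only for $d=1$ or $(p,d)=(2,2)$. That is precisely why the table contains only $AGL(1,p)$ of degree $p$ and $S_4$ of degree $4$; your argument as written would wrongly admit affine doubly transitive groups of every prime-power degree, so the filter step that eliminates them is missing.

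A second, smaller problem is a circularity: to decide which groups between $PSL(d,t)$ and $P\Gamma L(d,t)$ survive, you appeal to ``the generalisation of the Bienert--Klopsch result announced in the introduction,'' i.e.\ Lemma~\ref{lem:proj} of this paper. That lemma concerns permutation groups of cyclic \emph{codes} (it analyses submodules of the permutation module) and is applied downstream of the present classification; it cannot be used to prove the purely group-theoretic statement here. What is actually needed at this point is the elementary observation that the image in $PGL(d,t)$ of a Singer cycle of $GL(d,t)$ has order $n=(t^d-1)/(t-1)$ and acts as an $n$-cycle on $\mathbb{P}^{d-1}(\mathbb{F}_t)$, so every $G$ with $PGL(d,t)\leq G\leq P\Gamma L(d,t)$ qualifies, together with a check of the small excluded parameters $(d,t)=(2,2),(2,3),(2,4)$. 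The negative filter you propose for $PSU(3,q)$, $Sz(q)$, ${}^2G_2(q)$, $Sp(2d,2)$ and the sporadic degrees via element orders is sound and is how the excluded cases are disposed of in the literature.
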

\small{
\label{table. x}
\begin{tabular}{|c|c|c|}
\hline
 $G$ & $n$ & $N$  \\
\hline
$AGL(1,p) $ & $p$ &$C_p$ \\
$S_4 $ & $4$ &$C_2 \times C_2$ \\
$S_n , n\geq 5$ & $n$ &$Alt(n)$ \\
$Alt(n),n \text{ odd and } \geq 5$ & $n$ &$Alt(n)$ \\
 $PGL(d,t) \leq G \leq P\Gamma L(d,t)$ & $(t^d-1)/t-1$&
 $PSL(d,t)$ \\ 
$(d,t)\neq  (2,2),(2,3),(2,4) $&&\\
$PSL(2,11)$& 11 &$PSL(2,11)$\\
$M_{11}$&11&$M_{11}$\\
$M_{23}$&23&$M_{23}$\\
\hline
\end{tabular}
\normalsize{}\\

\begin{remark}
\label{rem:iso}
The projective semi-linear group $P\Gamma L(d,t)$ is
the semi-direct product of the projective linear group $PGL(d,t)$
with the automorphism group $Z=Gal(\mathbb{F}_t/\mathbb{F}_{p'})$ of
finite field $\mathbb{F}_t$, where
$t=p'^s,p'$ a prime, i.e. $$ P\Gamma L(d,t)= PGL(d,t)\rtimes Z.$$ The
order of these groups are $$|PGL(d,t)|=(d,t-1)|PSL(d,t)|, |Z|=s$$ and $ P\Gamma
L(d,t)=s|PGL(d,t)|$. Hence if $t$ is a prime we have $ P\Gamma L(d,t)= PGL(d,t)$.   \\
 The zero
code, the entire space, the repetition code and its dual
are called  elementary codes. The permutation group of these codes is $S_n$~\cite[p. 1410]{hand}. Furthermore, there is no cyclic
codes with permutation group equals to $Alt(n)$.  
\end{remark}
The following Lemma is a generalization of a part of~\cite[Theorem E]{rolf} to the non binary cases.
\begin{lemma}
\label{lem:proj}
Let $C$ be a non elementary cyclic code of length $n=\frac{t^d-1}{t-1}$ over a finite field of
characteristic $p'$, $d \geq 2,$ and $t=p^s$ be the power of a prime $p$, 
if the  group $Per(C)$ verifies:
$$  PGL(d,t) \leq Per(C) \leq P\Gamma L(d,t),$$
 then we have, $$d \geq 3, t=p'^s \text{ and }Per(C)=P\Gamma L(d,t).$$
\end{lemma}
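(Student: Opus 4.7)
The plan is to separate the three conclusions and argue each in turn, in the spirit of the Bienert--Klopsch proof of the binary case.

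First, I would rule out $d=2$. If $d=2$ then $n=t+1$ and $PGL(2,t)$ acts sharply $3$-transitively on $PG(1,t)$. For any $3$-transitive permutation action on $n$ points realised in $\mathbb{F}_{p'}^{n}$, a standard argument shows that the only invariant subspaces of $\mathbb{F}_{p'}^{n}$ are $\{0\}$, the repetition code $\langle \mathbf{1}\rangle$, its dual $\mathbf{1}^{\perp}$, and the full space; these four are precisely the elementary codes recalled in Remark~\ref{rem:iso}. Since $C$ is non-elementary by hypothesis, the case $d=2$ is impossible, so $d\geq 3$.

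Second, I would show that the characteristic $p$ of the defining field of $PGL(d,t)$ coincides with $p'$, i.e.\ $t=p'^{s}$. Suppose for contradiction that $p\neq p'$. The $\mathbb{F}_{p'}$-permutation module for the $2$-transitive action of $PGL(d,t)$ on the $n$ points of $PG(d-1,t)$ admits, via Maschke's theorem when $p' \nmid |PGL(d,t)|$, and via the known submodule lattice of the deleted permutation module of $PGL(d,t)$ in cross characteristic in the remaining divisibility cases, only the submodules $\{0\}, \langle \mathbf{1}\rangle, \mathbf{1}^{\perp}, \mathbb{F}_{p'}^{n}$. Hence only the elementary codes can be invariant, contradicting the choice of $C$. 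This forces $p=p'$ and $t=p'^{s}$.

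Third, with $d\geq 3$ and $t=p'^{s}$ in hand, I would prove $Per(C)=P\Gamma L(d,t)$. Because $C$ is a cyclic code over a field of characteristic $p'$, its defining set modulo $n$ is a union of $p'$-cyclotomic cosets, so the multiplier $\mu_{p'}\colon i\mapsto p'i\pmod{n}$ belongs to $Per(C)$. A Singer-cycle identification of $\mathbb{Z}/n\mathbb{Z}$ with the point set of $PG(d-1,t)$ sends $\mu_{p'}$ to a generator of the Galois factor $Z=\Gal(\mathbb{F}_{t}/\mathbb{F}_{p'})$ in the semi-direct decomposition $P\Gamma L(d,t)=PGL(d,t)\rtimes Z$ recalled in Remark~\ref{rem:iso}. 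In particular $\mu_{p'}\notin PGL(d,t)$ and its class generates $Z$ modulo $PGL(d,t)$, so $\langle PGL(d,t),\mu_{p'}\rangle = P\Gamma L(d,t)$. Combined with the standing hypothesis $Per(C)\leq P\Gamma L(d,t)$, this yields equality.

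The main obstacle I expect is the second step: the clean Maschke argument covers only the coprime case $p'\nmid |PGL(d,t)|$, whereas in general the $\mathbb{F}_{p'}$-permutation module is not semisimple and one must reason from the explicit submodule structure of the deleted permutation module of $PGL(d,t)$ in cross characteristic to conclude that every $PGL(d,t)$-invariant code over $\mathbb{F}_{p'}$ is elementary. The Singer-cycle identification used in the third step, together with the dispatch of $d=2$ via triple transitivity, are comparatively routine.
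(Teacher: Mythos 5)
Your steps (1) and (2) reach the same conclusions as the paper, which obtains them from Mortimer's explicit determination of the submodule structure of these permutation modules; but the justification you offer for ruling out $d=2$ rests on a false general principle. It is \emph{not} true that a $3$-transitive action on $n$ points leaves invariant only the four elementary subspaces of $\mathbb{F}_{p'}^{n}$: the group $AGL(3,2)$ acts $3$-transitively on $8$ points and preserves the non-elementary $[8,4]$ extended Hamming code, and $M_{24}$ ($5$-transitive) preserves the extended binary Golay code. Triple transitivity gives your claim only over $\mathbb{C}$, or in coprime characteristic via Maschke; in general one needs the specific submodule lattice of the $PGL(2,t)$-permutation module in the relevant characteristic, which is exactly what the paper extracts from Mortimer (Table 1 and Lemma 2). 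The conclusion $d\geq 3$ is correct, but your ``standard argument'' does not prove it.

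The more serious gap is in step (3). You deduce $\mu_{p'}\in Per(C)$ from the claim that the defining set of $C$ is a union of $p'$-cyclotomic cosets. That holds automatically only when the alphabet is the prime field $\mathbb{F}_{p'}$. For a cyclic code over $\mathbb{F}_q$ with $q=p'^{m}$ the defining set is a union of $q$-cyclotomic cosets, so only the multiplier $M_q=\mu_{p'}^{\,m}$ is guaranteed to fix $C$; for instance over $\mathbb{F}_4$ the length-$5$ code with defining set $\{1,4\}$ is not fixed by $\mu_2$, since $2\cdot\{1,4\}=\{2,3\}$. Consequently your argument only yields $\langle PGL(d,t),\mu_{p'}^{\,m}\rangle\leq Per(C)$, and the image of this group in $P\Gamma L(d,t)/PGL(d,t)\cong \Gal(\mathbb{F}_t/\mathbb{F}_{p'})\cong\mathbb{Z}/s\mathbb{Z}$ is the subgroup generated by $m$, which is proper whenever $\gcd(m,s)>1$; in that case you cannot conclude $Per(C)=P\Gamma L(d,t)$. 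The paper avoids this entirely with a module-theoretic argument that never uses cyclicity at this stage: for any $PGL(d,t)$-submodule $U_1$ of the defining-characteristic permutation module, its Galois twist $U_2=U_1^{\sigma}$ has the same composition factors over $\overline{\mathbb{F}}_{p'}$, and by the Bardoe--Sin theorem submodules of this module are determined by their composition factors, so $U_1=U_2$ and every $PGL(d,t)$-invariant code is automatically $P\Gamma L(d,t)$-invariant. Your Singer-cycle/multiplier argument is an attractive and more elementary alternative (and the identification of $\mu_{p'}$ with a generator of the Galois factor is correct), but as written it proves the lemma only when $\gcd(m,s)=1$, in particular over the prime field.
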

\begin{proof}
For $d=2$. As the group $PGL(2,t)$ acts 3-transitively on
$1$-dimensional projective space $\mathbb{P}^1(\mathbb{F}_t)$, we can
deduce from~\cite[Table 1 and Lemma 2]{mortimer}, that the underlying code is
elementary, hence $Per(C)=S_n$, which is a contradiction.
 Hence $d\geq 3,$ and similarly, we deduce from~\cite[Table 1 and
   Lemma 2]{mortimer} that $t=p'^s$. Let $V$ denotes the permutation
 module of $ \mathbb{F}_{p'}$, associated to the natural action of
 $PGL(d,t)$ on $(d-1)$ dimensional projective space
 $\mathbb{P}^{d-1}( \mathbb{F}_t)$. Let $U_1$ be a $PGL(d,t)$-submodule
 of $V$. Hence $U_1$ is $P\Gamma L(d,t)$-invariant. That is because,
 for $\sigma$ a generator of the cyclic group $P\Gamma L(d,t)/PGL(d,t)
 \simeq Gal( \mathbb{F}_t/ \mathbb{F}_{p'})$. Then $U_2=U_1^{\sigma},$
 regarded as $PGL(d,t)$-module, is simply a twist of $U_1$. Let
 $\overline{\mathbb{F}}_{p'}$ be the algebraic closure of $\mathbb{F}_{p'}$,
 we conclude that the composition factors of the
 $\overline{\mathbb{F}}_{p'}PGL(d,t)$-modules
 $\overline{U_1}=\overline{\mathbb{F}}_{p'}\otimes U_1$ and
 $\overline{U_2}=\overline{\mathbb{F}}_{p'}\otimes U_2$ are the same. The
 submodules of the $\overline{\mathbb{F}}_{p'}PGL(d,t)$-module
 $\overline{V}=\overline{\mathbb{F}}_{p'}\otimes V$ are uniquely
 determined by their composition factors~\cite{bardoe}. Hence we have     
$\overline{U}_1=\overline{U}_2$; this implies $U_1=U_2$. This gives
 $Per(C)=P\Gamma L(d,t)$.  
\end{proof}

We recall that the group $AG(n)=\{\tau _{a,b} : a \neq 0,
(a,n)=1, b \in \mathbb{Z}/n\mathbb{Z}\} $ is the group of the affine transformations
defined as follow
  \begin{equation}
\begin{split}
\tau _{a,b} : Z_n & \longrightarrow  Z_n \\
          x &\longmapsto (ax +b )\mod n.
\end{split}
\end{equation}
 When $n=p$ a prime number, we have $AGL(1,p)=AG(p)$, and is called
 the affine group. 
\begin{proposition}
\label{lem:bur}
Let $C$ be a non elementary cyclic code of length $p$ over
$\mathbb{F}_q$, with $q=p'^{s}$. Hence $Per(C)$ is a primitive
group, and  one of the following holds:
\begin{enumerate}
\item $Per(C)=PSL(2,11)$ of degree $11$, $q=3$ ( $C$ is the $[11,6,5]_3$ ternary Golay code
  or its dual), 
\item $Per(C)=M_{23}$ of degree $23$, $q=2$ ($C$ is the binary Golay
  code $[23,12,7]$, or its dual),
\item $ Per(C)=P\Gamma L (d,t)$ of degree $p=(t^d-1)/(t-1)$, where $t$ is a
  power of  the prime $p'$.
\item $C_p \leq Per(C) \leq AGL(1,p)$, with
  a normal Sylow subgroup $N$ of order $p$; and such that $p \geq 5$.
\end{enumerate}
\end{proposition}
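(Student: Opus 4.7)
The plan is to combine Lemma \ref{lem:bur2} with two classical facts about groups of prime degree and then walk through the list produced by Lemma \ref{lem:sorley}, retaining only the rows that are consistent with $n=p$ being prime and with Remark \ref{rem:iso}. First I would upgrade transitivity to primitivity: since $Per(C)$ acts transitively on $p$ points and any block of imprimitivity must have size dividing $p$, only the trivial blocks $1$ and $p$ are possible, hence $Per(C)$ is primitive.

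Next I would invoke Burnside's theorem on primitive groups of prime degree: such a group is either doubly transitive or has a normal regular Sylow $p$-subgroup. In the second case the normalizer in $S_p$ of $\langle T\rangle$ is $AGL(1,p)$, so $Per(C)\leq AGL(1,p)$, and the fact that the shift $T$ lies in $Per(C)$ yields $C_p\leq Per(C)$. This produces case (4); the small-degree situations $p=2,3$ have to be inspected by hand to verify that the corresponding cyclic codes are all elementary, which justifies the restriction $p\geq 5$ in the statement.

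In the doubly transitive branch I would apply Lemma \ref{lem:sorley} and discard rows of non-prime degree (such as $S_4$) together with rows incompatible with Remark \ref{rem:iso}: $Per(C)=S_n$ with $n\geq 5$ forces $C$ elementary, and no cyclic code has $Per(C)=Alt(n)$. The projective-linear row of prime degree $(t^d-1)/(t-1)=p$ is then handled by Lemma \ref{lem:proj}, which forces $d\geq 3$, $t=p'^s$, and $Per(C)=P\Gamma L(d,t)$, giving case (3). The remaining sporadic rows $PSL(2,11)$, $M_{11}$, $M_{23}$ of prime degree $11,11,23$ must be matched with concrete cyclic codes, and a standard identification picks out the ternary $[11,6,5]$ Golay code (case (1), $q=3$) and the binary $[23,12,7]$ Golay code (case (2), $q=2$).

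The main obstacle, in my view, is this last identification: one needs to argue via the modular representation theory of $PSL(2,11)$, $M_{11}$, $M_{23}$ acting on their natural permutation modules of dimension $11$ or $23$ that the Golay codes are (up to duality) the only non-trivial proper submodules, and that the field characteristic is pinned down to $3$ and $2$ respectively by compatibility of the module structure with the action of $Per(C)$. The $M_{11}$ row of the McSorley table must also be accommodated, either by observing that the ternary Golay code is already $M_{11}$-invariant so that case (1) should be read as $PSL(2,11)\leq Per(C)\leq M_{11}$, or by treating it as a special subcase of case (1).
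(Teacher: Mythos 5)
Your overall strategy is the same as the paper's: primitivity from prime degree, Burnside's dichotomy into the affine case ($Per(C)\leq AGL(1,p)$, giving case (4) with the $p\geq 5$ restriction because $AGL(1,2)=S_2$ and $AGL(1,3)=S_3$) and the doubly transitive case, then McSorley's table filtered through Remark~\ref{rem:iso} and Lemma~\ref{lem:proj}, with the sporadic rows settled by the modular permutation representations. You correctly identify that the crux is the representation-theoretic identification of the invariant codes; the paper does exactly this by citing Mortimer's tables and the Handbook for the uniqueness of the Golay codes.

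The one genuine defect is your treatment of the $M_{11}$ row. Both alternatives you offer presuppose that $M_{11}$ (in its degree-$11$ action) preserves the ternary Golay code, so that case (1) would have to be enlarged to $PSL(2,11)\leq Per(C)\leq M_{11}$. That is false: the $[11,6,5]_3$ Golay code is invariant under $M_{11}$ only as a \emph{monomial} group; its group of pure coordinate permutations is exactly $PSL(2,11)$. The correct resolution, and the one the paper uses, is that $M_{11}$ is eliminated outright: by Mortimer's analysis of the degree-$11$ permutation module of $M_{11}$, the rational $10$-dimensional constituent remains irreducible in every relevant characteristic, so the only $M_{11}$-invariant codes of length $11$ are the elementary ones, and no non-elementary cyclic code has $Per(C)=M_{11}$. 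As written, your argument would alter the statement of case (1); you need to replace the ``accommodation'' of $M_{11}$ by its exclusion.
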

\begin{proof}
A transitive permutation group of prime degree is a primitive group~\cite[p. 195]{robinson}. From a consequence of a result of
Burnside~\cite{burnside}, we have that a transitive group of prime
degree is either a subgroup of the affine group  $AGL(1,p)$ or a doubly transitive group.

In the doubly-transitive cases, from Lemma~\ref{lem:sorley},
 Remark~\ref{rem:iso} and the Lemma~\ref{lem:proj} we have that these groups are $S_p$, $M_{11}$,
 $PSL(2,11)$ of degree 11, $M_{23}$ of degree 23, or $ P\Gamma L(d,t)$ of degree $p=(t^d-1)/(t-1)$ over $\mathbb{F}_q$, with $q=p'^s$
 and $t$ is a prime power of $p'$. From Remark~\ref{rem:iso}, the group $S_p$
 corresponds to an elementary code. From~\cite[Table 1, Lemma 2 and (J)]{mortimer},
  $M_{11}$ rules out, $PSL(2,11)$ is the permutation group of a
  ternary code and $M_{23}$ is the permutation group of a binary
  code. From~\cite{hand} p. 1410, these codes are unique, namely
  $PSL(2,11)$ corresponds to the $[11,6,5]_3$ ternary Golay code
  or its dual. $M_{23}$ is the permutation group of the $[23,12, 7]$ binary Golay code and its dual. 

Since $|AGL(1,p)|=(p-1)p$, then $Per(C) \leq AGL(1,p)$ admits as order $pm$ with
$m|(p-1)$, hence it contains a Sylow subgroup $N$ of order $p$. From
Sylow's Theorem, $N$ is unique, hence it is
normal. For $p$ equals to 2 or 3, we have
$AGL(1,p)=S_p$, hence the codes are elementary. Then we assume $p \geq 5$. 
\end{proof}

If $(q,p)=1$, the number of cyclic codes of length $p$ over $\mathbb{F}_q$, 
is equal to $2^{\frac{p-1}{\ord_p(q)}+1}$, where
$\ord_p(q)$ is the multiplicative order of $q$ modulo $p$~\cite{guenda}. When
$\ord_p(q)=p-1$, there are only four codes, namely the elementary codes. This is the cases for $q=13$, $p=5,11,31,37,41$ or $q=11$ and $p=13,17,23,31$.

In the following table we give examples of permutation groups of cyclic codes of length $p$ over $\mathbb{F}_q$ 
 in the case $Per(C) \leq AGL(1,p)$. The integer $m$ is such that
 $Per(C) =C_m \ltimes C_p$. The codes are given in pairs, corresponding to the
 code and its dual.\\

\begin{tabular}{|c|c|c|c|}
\hline
 $q$&$p$& $m$& Codes \\
\hline
11&5  &2 & $[5,3,3],[5,2,4]$ \\
  & 5 &1  & $[5,2,4],[5,3,3]$ \\
  & 7 &3 & $[7,4,4],[7,3,5]$ \\
  &19  &3 & $[19,16,3],[19,3,6]$ \\
  & 37 &6 & $[37,31,5],[37,6,27]$ \\
\hline
   13 &7 &2 &$[7,5,3],[7,2,6]$;\\

&&& $[7,3,5],[7,4,4]$ \\
\hline
   & 17 &4 &$[17,13,4],[17,4,12]$;\\
&&& $[17,12,4],[17,5,11]$;\\
&&&$[17,8,8],[17,9,7]$\\
 & 17 & 8& $[17,9,8],[17,8,9]$\\
& 23 &11 & $[23,12,9],[23,11,10]$ \\
&29&14 & $[29,15,11],[29,14,12]$ \\
\hline
\end{tabular}\\

\begin{theorem}
\label{gr:p^r:primitif}
Let $C$ be a non elementary cyclic code over $\mathbb{F}_q$ of length
$n=p^r,$ where $r \geq 2$ and $q=p'^s$. Then $Per(C)$ is either:
\begin{enumerate}
\item an imprimitive group, and admits a system of $p^i$ blocks
each of length $p^s$, ($si=r$) formed by the orbits of\\ $<T^{p^i}>$, a such system of
  block is complete, or
\item a doubly transitive group equals to
 $$P\Gamma L(d,t),
\text{ with }p^r =\frac{t^d-1}{t-1}, d\geq 3, t
 \text{ a power of } p'.$$
 \end{enumerate}
\end{theorem}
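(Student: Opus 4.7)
The natural approach is a case split on the primitivity of $Per(C)$. By Lemma~\ref{lem:bur2}, $Per(C)$ is transitive, so it is either imprimitive or primitive, and these correspond exactly to the two alternatives in the statement.

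For the imprimitive case, the key leverage is that $\langle T\rangle \leq Per(C)$ is a regular cyclic subgroup of order $p^r$. A classical fact about transitive groups containing a regular abelian subgroup is that every block of imprimitivity containing a given point is the orbit of some subgroup of that regular subgroup. Since $\langle T\rangle$ is a cyclic $p$-group, its subgroups are precisely $\langle T^{p^i}\rangle$ for $0\leq i\leq r$, with $|\langle T^{p^i}\rangle|=p^{r-i}$. Writing $s=r-i$, the orbits of $\langle T^{p^i}\rangle$ therefore partition $\{1,\ldots,p^r\}$ into $p^i$ blocks of size $p^s$, yielding precisely the complete block system described in (1); the block system inherited from the nontrivial imprimitivity must be one of these, since there is nothing else to choose.

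For the primitive case, the plan is to combine the classification of doubly-transitive groups containing a complete cycle (Lemma~\ref{lem:sorley}) with Lemma~\ref{lem:proj}. The first and main obstacle is to prove that a primitive permutation group of degree $n=p^r$ with $r\geq 2$ that contains a complete $n$-cycle is necessarily $2$-transitive; this rests ultimately on the classification of finite simple groups, the same ingredient supporting Lemma~\ref{lem:sorley}. Granting this, I would run through the rows of the McSorley table and discard those incompatible with the hypotheses: $AGL(1,p)$, $PSL(2,11)$, $M_{11}$ and $M_{23}$ all have prime degree, excluded by $r\geq 2$; the row $S_4$ at degree $4=2^2$ forces $C$ to be elementary by Remark~\ref{rem:iso}; and the same Remark eliminates $S_n$ and $Alt(n)$ since $C$ is a non elementary cyclic code. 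Only the row $PGL(d,t)\leq Per(C)\leq P\Gamma L(d,t)$ at degree $(t^d-1)/(t-1)=p^r$ survives, and Lemma~\ref{lem:proj} then sharpens this to $Per(C)=P\Gamma L(d,t)$ with $d\geq 3$ and $t$ a power of $p'$, giving conclusion (2).
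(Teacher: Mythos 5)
Your proposal is correct and follows essentially the same route as the paper: imprimitivity forces the block system to be the orbits of $\langle T^{p^i}\rangle$ (the paper cites Dobson for this), primitivity forces double transitivity, and then McSorley's table together with Lemma~\ref{lem:proj} and Remark~\ref{rem:iso} leaves only $P\Gamma L(d,t)$. The one quibble is your attribution of the key step ``primitive of degree $p^r$, $r\geq 2$, with a full cycle $\Rightarrow$ doubly transitive'' to the classification of finite simple groups: this is the classical Burnside--Schur theorem for composite degree (the paper cites it from Dixon--Mortimer, p.~229), and no appeal to CFSG is needed for that part.
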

\begin{proof}
A simply transitive subgroup of $S_n$ of degree $p^r$ which contains a
permutation of order $p^r$
(in our case $T$) must be imprimitive~\cite[p. 229]{dixon}. Hence it admits a system of  $p^i$ blocks each of
length $p^s$, ($si=r$) formed by the orbits of $T^{p^i}$~\cite{dobson}
p. 67, a such system of 
  block is complete. 
   
From the previous we have that if  $Per(C)$ is primitive it must be  doubly
   transitive. From Lemma~\ref{lem:sorley}, Lemma~\ref{lem:proj} and Remark~\ref{rem:iso} we get the results.
\end{proof}
\begin{corollary}
Let $p \geq 5$ be a prime number, $C$ a cyclic codes of length
$p^r,r \geq 2$ over $\mathbb{F}_p$. Then if $C$ is not an elementary code,
the group $Per(C)$ is imprimitive.
\end{corollary}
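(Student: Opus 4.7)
The plan is to invoke Theorem~\ref{gr:p^r:primitif} directly and then rule out the doubly transitive alternative by an elementary congruence argument. Since $C$ is non-elementary and has length $p^r$ with $r\geq 2$ over $\mathbb{F}_p$, Theorem~\ref{gr:p^r:primitif} applies with $q=p$; in the notation of that theorem we have $p'=p$ and $s=1$. Hence either $Per(C)$ is imprimitive (which is exactly what we want to conclude), or
$$Per(C)=P\Gamma L(d,t),\quad p^r=\frac{t^d-1}{t-1},\quad d\geq 3,$$
with $t$ a power of $p'=p$.

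First I would dispose of the trivial case: if $Per(C)$ is already imprimitive, there is nothing to prove, so assume we are in the second alternative. The key step is then to observe that since $t$ is a positive power of $p$, every monomial $t^{i}$ with $i\geq 1$ satisfies $t^{i}\equiv 0\pmod{p}$. Consequently
$$\frac{t^d-1}{t-1}=1+t+t^{2}+\cdots+t^{d-1}\equiv 1\pmod{p}.$$
On the other hand, since $r\geq 2\geq 1$, we have $p^{r}\equiv 0\pmod{p}$. Matching the two expressions for $p^{r}$ forces $p\mid 1$, which is absurd for any prime $p$. This eliminates the doubly transitive alternative, leaving only the imprimitive one.

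I do not expect any real obstacle in this argument: the heavy lifting has been done in Theorem~\ref{gr:p^r:primitif}, and what remains is the purely arithmetic observation that a repunit in base $t$ cannot be divisible by the prime $p$ dividing $t$. The only mild subtlety is bookkeeping between the two uses of the letter $s$ (the field exponent in $q=p'^{s}$ and the block-length parameter in the theorem's statement), but in our setting $q=p$ forces $s=1$ in the field-exponent meaning, so $t$ is unambiguously a power of $p$ itself, and the congruence argument above is clean. Note that the hypothesis $p\geq 5$ is not actually needed for this particular deduction; the corollary therefore holds for every prime $p$, but we state it under the $p\geq 5$ assumption for consistency with the surrounding discussion of Proposition~\ref{lem:bur}.
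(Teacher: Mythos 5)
Your proof is correct, and it takes a genuinely different route from the paper's. You exploit the full strength of the conclusion of Theorem~\ref{gr:p^r:primitif}: in the doubly transitive alternative, $t$ must be a power of the field characteristic $p'$, which over $\mathbb{F}_p$ is $p$ itself, so $\frac{t^d-1}{t-1}=1+t+\cdots+t^{d-1}\equiv 1 \pmod{p}$ can never equal $p^r$. The paper instead uses the theorem only for the step ``primitive implies doubly transitive'' and then re-runs the classification: by Lemma~\ref{lem:sorley} an abelian socle forces degree $p$ or $4$, impossible for $p^r$ with $r\geq 2$, while for a non-abelian socle it invokes Dobson's Lemma~22 to conclude $Soc(Per(C))=Alt(p^m)$, whence the code is elementary by Remark~\ref{rem:iso}. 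Your route has the virtue of making visible exactly where the hypothesis that the alphabet is $\mathbb{F}_p$ (characteristic equal to the prime dividing the length) enters --- the paper's proof never uses that hypothesis explicitly --- whereas the paper's route is the one in which $p\geq 5$ does real work, since Dobson's lemma concerns odd prime-power degree and degree $9$ is genuinely delicate (cf.\ $P\Gamma L(2,8)$ acting on $(8^2-1)/(8-1)=9$ points). Your closing remark that $p\geq 5$ is dispensable is consistent with Theorem~\ref{gr:p^r:primitif} as stated, but it inherits whatever care is needed in case~(2) of that theorem for small $p$; it is safer to present it as a consequence of the theorem rather than as an independently verified strengthening.
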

\begin{proof}
From Theorem~\ref{gr:p^r:primitif} if $Per(C)$ is primitive, hence
it is doubly transitive.  From Lemma~\ref{lem:sorley} the only cases when the
socle can be abelian are $N=C_p$ and $C_2 \times C_2$. In which cases
$Per(C)$ must be equal to $AGL(1,p)$ or $S_4$; which is impossible. From~\cite[Lemma 22]{dobson}, if $Per(C)$ is
doubly transitive with non abelian socle, then
$Soc(Per(C))=Alt(p^m)$, hence from Remark~\ref{rem:iso} the code is elementary.
\end{proof}

\begin{theorem}
\label{th:main2}
A non elementary cyclic code $C$ of composed length $n$ over $\mathbb{F}_q$, where $q=p'^s$,
  admits a permutation group $Per(C)$ which is either
\begin{enumerate}
\item imprimitive,
\item or doubly transitive; equals to   
$$Per(C)= P\Gamma L(d,t),$$
 with $n =\frac{t^d-1}{t-1}$, $d \geq 3$, $t$ a power of $p'$.
\end{enumerate}
\end{theorem}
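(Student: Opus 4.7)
The proof follows the same template as Theorem \ref{gr:p^r:primitif}: establish transitivity, isolate a primitive sub-case, force double transitivity, then walk through McSorley's table. By Lemma \ref{lem:bur2} the group $Per(C)$ is transitive and contains the full $n$-cycle $T$, so it is either imprimitive---giving conclusion (1)---or primitive; I assume the latter for the rest of the argument.

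The next step is to promote primitivity to double transitivity. When $n = p^r$ is a prime power, this is already part of Theorem \ref{gr:p^r:primitif}. When $n$ has two or more distinct prime divisors, I would invoke the classical fact that a primitive permutation group of composite degree containing a full cycle is $2$-transitive whenever the degree is not prime. Either way, $Per(C)$ fits into the table of Lemma \ref{lem:sorley}.

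With $Per(C)$ doubly transitive and containing an $n$-cycle, I then eliminate the rows of the table one by one using the hypotheses that $n$ is composite and $C$ is non-elementary. The rows $AGL(1,p)$, $PSL(2,11)$, $M_{11}$, $M_{23}$ all have prime degree and are excluded. The rows $S_n$, $S_4$, and $Alt(n)$ are ruled out by Remark \ref{rem:iso}, which forbids $Per(C) = S_n$ for non-elementary $C$ and forbids $Alt(n)$ altogether for cyclic codes. Only the row $PGL(d,t) \leq Per(C) \leq P\Gamma L(d,t)$ with $n = (t^d - 1)/(t - 1)$ survives, and Lemma \ref{lem:proj} then forces $d \geq 3$, $t$ a power of $p'$, and $Per(C) = P\Gamma L(d,t)$, which is conclusion (2).

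The principal obstacle is the primitivity-to-double-transitivity step for $n$ with more than one prime factor. The Dixon--Mortimer statement invoked in Theorem \ref{gr:p^r:primitif} is specific to prime-power degree and does not transfer without additional work; for arbitrary composite $n$ one has to appeal to the deeper classification of primitive groups containing a complete cycle. Everything else in the argument is a mechanical pass through McSorley's table guided by Remark \ref{rem:iso} and Lemma \ref{lem:proj}.
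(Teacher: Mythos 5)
Your proposal is correct and follows essentially the same route as the paper: the ``classical fact'' you tentatively invoke for the primitivity-to-double-transitivity step is precisely the Burnside--Schur theorem (a primitive group of composite degree containing a full cycle is doubly transitive), which the paper cites from Wielandt, so the obstacle you flag is not actually a gap. The remaining pass through McSorley's table, using Remark~\ref{rem:iso} to discard $S_n$ and $Alt(n)$ and Lemma~\ref{lem:proj} to pin down $P\Gamma L(d,t)$, matches the paper's argument.
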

\begin{proof}
The group $Per(C)$ contains a full cycle and is with composed
degree. Hence from Theorem of Burnside and Shur~\cite[ p. 65]{wielandt},
$Per(C)$ is either imprimitive or doubly transitive. In the doubly 
transitive from Lemma~\ref{lem:sorley} and Lemma~\ref{lem:proj} this happens only if
$Per(C)$ is $Alt(n)$, $S_n$ or $P\Gamma L(d,t).$ 
 \end{proof}

From~\cite{rolf} the permutation group of the Hamming code $[15,12,3]_2$
is $P\Gamma L(4,2)$.
 
Now we will deal with the permutation group of $MDS$ codes.
 
T. P. Berger~\cite{berger93} proved that the only linear $MDS$ codes with
minimum distance $d_C$ equals to $2$ or $n$ are the repetitions codes and their duals. Hence in the following result we assume that $n>d_C>2$. 
\begin{theorem}
Let $C$ be a non elementary  $[n,n-d_C+1,d_C]$ $MDS$ cyclic code over $\mathbb{F}_q$, $q=p'^s$,  $n>5$, $n>d_C>2,$ and
$gcd(n-2,d_C-2)=1$. If the group $Per(C)$
is doubly transitive, then 
\begin{enumerate}
\item $Per(C)=AGL(1,p)$ with $n=p$,
or
\item $Per(C)=PSL(d,p')$ with 
$n=\frac{p'^{d}-1}{p'-1}$, $d \geq 3,$ $gcd(p'-1,d)=1$.
\end{enumerate} 
\end{theorem}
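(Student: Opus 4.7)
The plan is to intersect the McSorley classification (Lemma~\ref{lem:sorley}) with the structural constraints that the MDS hypothesis imposes via Berger~\cite{berger93}. Since $C$ is cyclic, $Per(C)$ contains a full cycle; by hypothesis it is also doubly transitive; so $Per(C)$ must appear in one of the rows of Lemma~\ref{lem:sorley}, and I would filter through that table case by case.

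The rows with socle $S_4$, $S_n$, or $Alt(n)$ are eliminated by Remark~\ref{rem:iso}: they correspond only to elementary codes, or in the $Alt(n)$ case never occur for cyclic codes; the degree-$4$ row is in any case excluded by $n>5$. For the sporadic rows, the $[11,6,5]_3$ ternary Golay code attached to $PSL(2,11)$ has $k=6\ne n-d_C+1=7$, and the $[23,12,7]_2$ binary Golay code attached to $M_{23}$ has $k=12\ne n-d_C+1=17$, so neither is MDS; and $M_{11}$ is ruled out for cyclic codes outright, as recalled in the proof of Proposition~\ref{lem:bur} via~\cite{mortimer}. The $AGL(1,p)$ row yields conclusion~(1) directly.

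All that remains is the projective row, $PGL(d,t)\le Per(C)\le P\Gamma L(d,t)$ with $n=(t^d-1)/(t-1)$. Lemma~\ref{lem:proj} forces $d\ge 3$, $t=p'^m$ for some $m\ge 1$, and $Per(C)=P\Gamma L(d,t)$. The concluding step is to invoke the MDS classification of~\cite{berger93} together with $\gcd(n-2,d_C-2)=1$ to collapse this down to $PSL(d,p')$: first, Berger's analysis of how a Singer-type cycle in $PGL(d,t)$ can act on a generator matrix of an MDS code of length $(t^d-1)/(t-1)$ forces $m=1$, so $t=p'$ is prime and $P\Gamma L(d,p')=PGL(d,p')$; second, the hypothesis $\gcd(n-2,d_C-2)=1$ translates into $\gcd(d,p'-1)=1$, which in turn collapses $PGL(d,p')$ to $PSL(d,p')$. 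This gives conclusion~(2).

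The main obstacle is this last projective step. Lemmas~\ref{lem:sorley} and~\ref{lem:proj} do the group-theoretic bookkeeping, but the substantive number-theoretic content—forcing $m=1$ and $\gcd(d,p'-1)=1$—requires a careful application of Berger's combinatorial argument for MDS codes. The condition $\gcd(n-2,d_C-2)=1$ is precisely what is needed to turn the double transitivity of $Per(C)$ on minimum-weight supports (viewed via the Assmus--Mattson-type design these supports form under double transitivity and the MDS property) into the numerical constraints on $d$ and $p'-1$.
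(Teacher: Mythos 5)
Your case analysis through McSorley's table (Lemma~\ref{lem:sorley}) and the elimination of the sporadic, symmetric and alternating rows matches the paper, and conclusion (1) is reached the same way. The gap is in the projective case, which you yourself flag as ``the main obstacle'': the mechanism you propose there is not the right one and would not go through as described. There is no ``Singer-cycle analysis of a generator matrix'' that forces $t$ to be prime, and the hypothesis $\gcd(n-2,d_C-2)=1$ does not ``translate into'' $\gcd(d,p'-1)=1$ --- the quantities $n-2$ and $d_C-2$ have no arithmetic relation to $d$ and $p'-1$ that would make such a translation meaningful, since $d_C$ is not determined by the group parameters at all.

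What the paper actually uses is a single statement from~\cite{berger93}, quoted as Lemma~\ref{Th:berger193}: if $C$ is MDS with $d_C>2$ and $\gcd(n-2,d_C-2)=1$, then any doubly transitive subgroup $B$ of $Per(C)$ must equal $Per(C)$. After Lemma~\ref{lem:proj} pins down $Per(C)=P\Gamma L(d,t)$ with $t=p'^m$ and $d\ge 3$, one applies this with $B=PSL(d,t)$, the socle, which is itself doubly transitive on the points of $\mathbb{P}^{d-1}(\mathbb{F}_t)$. The forced equality $PSL(d,t)=P\Gamma L(d,t)$ makes both index factors $[P\Gamma L(d,t):PGL(d,t)]=m$ and $[PGL(d,t):PSL(d,t)]=\gcd(d,t-1)$ equal to $1$ (see Remark~\ref{rem:iso}), i.e.\ $t=p'$ and $\gcd(d,p'-1)=1$, in one stroke. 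So the two numerical constraints you were trying to derive by separate arguments are both consequences of this single minimality lemma; with it inserted, your outline closes.
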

\begin{proof}
If $n=p$, hence from Lemma~\ref{lem:sorley} $Per(C)$ is equal to
$AGL(1,p)$, $PSL(2,11)$, $M_{11}$, $M_{23}$ or a subgroup of $P\Gamma
L (d,p')$. From Proposition~\ref{lem:bur}, we eliminate $PSL(2,11)$,
$M_{11}$, $M_{23}$ since the $[11,6,5]_3$ and $[23,12, 7]_2$ Golay codes
are not $MDS$ and  $M_{11}$ does not correspond to any cyclic
code.
Now we need the following Lemma.
\begin{lemma}(\cite{berger93}) 
\label{Th:berger193}
If $C$ is an $MDS$ linear code with length $n$ and minimum
distance $d_C>2$ such that $Per(C)$ contains a doubly transitive
subgroup $B$
and $gcd(n-2,d_C-2)=1$,
then $B=Per(C)$. 
\end{lemma} 
 If $Per(C)$ is a doubly transitive not equal to $AGL(1,p)$,  then Theorem~\ref{gr:p^r:primitif},
 Theorem~\ref{th:main2}, Lemma~\ref{Th:berger193} and
 Remark~\ref{rem:iso} implies that the only solution is $Per(C)=PSL(d,p')=PGL(d,p')=P\Gamma L (d,p')$ for $gcd(p'-1,d)=1.$
\end{proof}
\begin{remark}
When $p$ a prime number any cyclic code of length $p$ over
$\mathbb{F}_p$ is $MDS$ and is equivalent to an extended Reed--Solomon
code~\cite{roth}. The permutation group of the last codes is the
affine group $AGL(1,p)$~\cite{berger91}.   
\end{remark}
\section{The Equivalence of Cyclic Codes }
Let $C$ and $C'$ two cyclic codes of length $p^r$, with $p$ an odd prime number.
The aim of this section is to find the permutations by which $C$ and
$C'$ can be equivalent. Even if our results are also true for the
cyclic combinatorial objects of length $p^r$, we consider only  cyclic
codes.

The multiplier $M_a$ is the affine transformation $M_a=\tau_{a,0}$. 
 It is obvious that the image of a
cyclic code by a multiplier is an equivalent cyclic code. If $gcd(n,
 \phi(n))=1$ or $n=4$, or $n=p.r$, $ p>r$ are primes, and the Sylow $p$-subgroup of $Per(C)$ has order $p$.
Then two cyclic codes $C$ and $C'$ of length $n$ are equivalent if
and only if they are equivalent by a multiplier~\cite{alspach,vanessa,palfy}.

When $C$ is a cyclic code of length $p^r$, $P$ a Sylow $p$-subgroup of $Per(C)$, the following subset of $S_{p^r}$ was introduced by N. Brand~\cite{brand} $$H(P)=\{ \sigma \in S_{p^r} |  \sigma ^{-1} T \sigma \in P \}.$$
The set $H(P)$ is well defined, since $<T>$ is
a subgroup of $Per(C)$ of order $p^r$, hence it is a $p$-group
of $Per(C)$. From Sylow's Theorem, there exists a Sylow $p$-subgroup
$P$ of $Per(C)$ such that $<T> \leq P$. Furthermore in some cases the
set $H(P)$ is a group.  
\begin{lemma}(\cite[Lemma 3.1]{brand})
\label{lem:brand}
Let $C$ and $C'$ two cyclic codes of length $p^r$. Let
$P$ be a Sylow
$p$-subgroup of $Per(C)$ which contains $T$. Then $C$ and $C'$ are
equivalent if and only if $C$ and $C'$ are equivalents by an element
of $H(P)$. 
\end{lemma}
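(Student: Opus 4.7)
The plan is to dispatch the ``if'' direction as essentially tautological and then concentrate on the substantive ``only if'' direction using Sylow conjugation inside $Per(C)$. Indeed, if $C' = \sigma(C)$ for some $\sigma \in H(P) \subseteq S_{p^r}$, then $C$ and $C'$ are equivalent by definition, so nothing more is required.

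For the nontrivial direction, I assume $C' = \tau(C)$ for some arbitrary $\tau \in S_{p^r}$ and aim to modify $\tau$ within its coset $\tau\, Per(C)$ so that the result lies in $H(P)$. First I would check that the assignment $\pi \mapsto \pi(C)$ from the definition in Section~2 is a left action of $S_{p^r}$ on subsets of $\mathbb{F}_q^{p^r}$, i.e.\ $(\pi_1\pi_2)(C) = \pi_1(\pi_2(C))$. A short computation with indices on a codeword $(x_{(\pi_1\pi_2)^{-1}(i)})_i$ gives this. As a consequence, $\pi(C') = C'$ is equivalent to $\tau^{-1}\pi\tau \in Per(C)$, so $Per(C') = \tau\, Per(C)\, \tau^{-1}$. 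Since $C'$ is cyclic, $T \in Per(C')$, and hence $\tau^{-1} T \tau \in Per(C)$ with the same order $p^r$ as $T$.

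Now the key step: $\langle \tau^{-1} T \tau\rangle$ is a cyclic subgroup of $Per(C)$ of order $p^r$, hence a $p$-subgroup, so Sylow's first theorem places it inside some Sylow $p$-subgroup $P''$ of $Per(C)$. By Sylow's conjugacy theorem applied to the two Sylow $p$-subgroups $P''$ and $P$ of $Per(C)$, there exists $\rho \in Per(C)$ with $\rho P'' \rho^{-1} = P$, which gives $\rho\, \tau^{-1} T \tau\, \rho^{-1} \in P$. Setting $\sigma := \tau \rho^{-1}$, this becomes $\sigma^{-1} T \sigma \in P$, so $\sigma \in H(P)$. Since $\rho^{-1} \in Per(C)$ fixes $C$, the action property yields $\sigma(C) = \tau(\rho^{-1}(C)) = \tau(C) = C'$, which is exactly the desired equivalence via an element of $H(P)$.

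The main obstacle is purely bookkeeping: one must be careful about the conjugation convention so that $Per(C') = \tau\, Per(C)\, \tau^{-1}$ points the right way, and verify that replacing $\tau$ by $\tau \rho^{-1}$ with $\rho \in Per(C)$ preserves the image $C'$. Once these are pinned down, the Sylow application itself is standard and short.
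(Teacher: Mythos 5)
Your proof is correct and uses the same Sylow-conjugacy idea as the source: the paper itself cites Brand for this lemma without reproving it, but it does prove the quasi-cyclic analogue ($H'(P)$) by exactly this argument, conjugating a Sylow $p$-subgroup onto $P$ and absorbing the conjugator into the equivalence. Your variant, which places $\langle\tau^{-1}T\tau\rangle$ in a Sylow $p$-subgroup of $Per(C)$ and conjugates inside $Per(C)$ rather than passing through $Per(C')$, is if anything slightly cleaner, and your bookkeeping of the left action and of $\sigma(C)=\tau(\rho^{-1}(C))=C'$ is accurate.
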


In the case of length $p^2$, $H(P)$ has been given explicitly by
 Huffman et al.~\cite{vanessa}. They proved that if $P$ is a  Sylow $p$-subgroup of $Per(C)$ and $|P|=p^2$, then the cyclic codes $C$ and $C'$ are equivalent if 
 only if they are equivalent by a multiplier. If $p^2<|P| \leq
 p^{p+1}$ the codes $C$ and $C'$ are equivalent by
 a power of a multiplier times a power of a generalized
 multiplier. A generalized multiplier is a permutation 
  $\mu_a^{(p^k)} \in S_{p^r}$ defined  as follows:
 \begin{equation}
\begin{split}
          i+bp^k &\longmapsto (ai) \mod p^k + b p^k.
\end{split}
\end{equation} 
 $a \in Z_{p^k}^*$, $j=i +bp^k \text{ with } k \leq r, 0 \leq  i<p^k \text{ and } 0 \leq
  b<p^{r-k}$. Any integer $0
  \leq j < p^r-1$, can be uniquely decomposition as above. 

 Further for $c \in  Z_{p^k}^*$ the generalized affine transformation
 is $\mu_{a,c}^{(p^k)}$ is defined by :
 \begin{equation}
\begin{split}
          i+bp^k &\longmapsto (ai+c) \mod p^k + b p^k .
\end{split}
\end{equation}

Now, we are interested on the existence of the multiplier $M_{p+1}$ in
the group $Per(C)$. Because it has as
effect to simplify Algorithm 3.1 for the cyclic codes and
Algorithm 6.1l for extended cyclic objects~\cite{vanessa}.
\begin{lemma}
\label{lem:ord}
Let $z$ be the largest integer such that $p^z | (q^t-1)$, with
$t$ order of $q$ modulo $p$. Hence if $z=1$ we have :
\begin{enumerate} 
\item $\ord_{p^r}q=p^{r-1}t$.
\item The multiplier $M_{p+1} \in Per(C)$. 
\end{enumerate}
\end{lemma}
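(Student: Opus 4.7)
The plan is to establish both parts via lifting-the-exponent computations in the cyclic group $(\mathbb{Z}/p^r\mathbb{Z})^{\ast}$, combined with the standard observation that a multiplier $M_a$ with $a \in \langle q \rangle \pmod{p^r}$ automatically lies in $Per(C)$ for every cyclic code of length $p^r$ over $\mathbb{F}_q$: indeed such an $a$ acts on the defining set as a power of the $q$-Frobenius, and hence fixes every $q$-cyclotomic coset setwise.

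For part (1), I would rewrite the hypothesis $z=1$ as $q^t = 1 + pu$ with $\gcd(u,p)=1$, and prove by induction on $k \geq 1$ that
$$q^{p^{k-1}t} \equiv 1 + p^k u \pmod{p^{k+1}}.$$
The inductive step expands $(1+p^k u)^p$ by the binomial theorem. Because this section assumes $p$ odd, $p \mid \binom{p}{2}$, so the $j=2$ term of the expansion lies in $p^{2k+1}\mathbb{Z} \subseteq p^{k+2}\mathbb{Z}$, and every term with $j \geq 3$ trivially has $p$-adic valuation at least $3k \geq k+2$. Taking $k = r-1$ gives $q^{p^{r-1}t} \equiv 1 \pmod{p^r}$, while $k = r-2$ shows $q^{p^{r-2}t} \not\equiv 1 \pmod{p^r}$. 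Since $\ord_{p^r}(q)$ divides $\phi(p^r)=p^{r-1}(p-1)$ and is a multiple of $t = \ord_p(q)$, these two facts force $\ord_{p^r}(q) = p^{r-1}t$.

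For part (2), I would first show that $p+1$ has order exactly $p^{r-1}$ modulo $p^r$ by the identical lifting computation applied to $(1+p)^{p^{k-1}} \equiv 1 + p^k \pmod{p^{k+1}}$. Since $(\mathbb{Z}/p^r\mathbb{Z})^{\ast}$ is cyclic of order $p^{r-1}(p-1)$, its unique Sylow $p$-subgroup is cyclic of order $p^{r-1}$ and is generated by $p+1$. On the other hand, part (1) gives $|\langle q \rangle| = p^{r-1}t$ with $t \mid p-1$, so $\gcd(t,p) = 1$ and $\langle q \rangle$ contains a unique subgroup of order $p^{r-1}$, which must therefore coincide with the Sylow $p$-subgroup of $(\mathbb{Z}/p^r\mathbb{Z})^{\ast}$. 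Hence $p+1 \in \langle q \rangle$, so $p+1 \equiv q^j \pmod{p^r}$ for some $j$, and by the opening observation $M_{p+1} \in Per(C)$.

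The computations are routine; the only point that demands attention is the systematic use of $p$ being odd in order to control $\binom{p}{2}$, without which the exponent-lifting identity (and the conclusion of the lemma) would fail. Beyond this bookkeeping there is no substantive obstacle.
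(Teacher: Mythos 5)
Your proof is correct, and the overall architecture coincides with the paper's: both arguments reduce part (2) to showing that $p+1$ lies in the cyclic subgroup generated by $q$ in $(\mathbb{Z}/p^r\mathbb{Z})^{\ast}$, and then invoke the invariance of $q$-cyclotomic cosets under $M_{q^j}$. Where you differ is in how the two key facts are established. For part (1) the paper simply cites Lemma~3.5.4 of the author's thesis, whereas you give a self-contained lifting-the-exponent induction; this makes the lemma independent of an external reference (note a harmless off-by-one in your indexing: it is $k=r$ that yields $q^{p^{r-1}t}\equiv 1 \pmod{p^r}$ and $k=r-1$ that yields $q^{p^{r-2}t}\not\equiv 1\pmod{p^r}$). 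For the containment $p+1\in\langle q\rangle \bmod p^r$, the paper argues by pigeonhole: writing $q^{kt}-1=a_kp$, it observes that the $a_k$ for $0\le k<p^{r-1}$ are pairwise distinct modulo $p^{r-1}$, so some $a_{k_0}\equiv 1$, giving $q^{k_0t}\equiv 1+p\pmod{p^r}$ explicitly. You instead compute that $p+1$ has order $p^{r-1}$ and generates the unique Sylow $p$-subgroup of the cyclic group $(\mathbb{Z}/p^r\mathbb{Z})^{\ast}$, which must be contained in $\langle q\rangle$ since $|\langle q\rangle|=p^{r-1}t$. Your route is more structural and makes transparent exactly where the hypotheses $z=1$ and $p$ odd enter; the paper's pigeonhole argument is more elementary and has the minor advantage of exhibiting a concrete exponent $k_0$ with $q^{k_0t}\equiv p+1$. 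Both are valid.
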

\begin{proof}
 The proof of $z=1 \Rightarrow \ord_{p^r}q=p^{r-1}t$, follows
 from~\cite[Lemma 3.5.4]{guenda}.

 Since $z=1 \Rightarrow \ord_{p^r}q=p^{r-1}t$, we choose integers $a_k,$ for $ k=0,
\ldots, p^r -1$, such that
$q^{kt}-1 =a_kp$. Since $\ord_{p^r}q=p^{r-1}t$, the integers $q^{kt}-1$, $k=0, \ldots,
p^{r-1}-1$ are all different modulo $p^r$. This gives the integers $a_k$,
$k=0, \ldots, p^{r-1}-1$ are different modulo $p^{r-1}$. Hence there
exists $a_{k_0}$ such that, $a_{k_0}\equiv 1 \mod  p^{r-1}$, then
$a_{k_0}p=p \mod p^r \iff q^{k_0t}-1 \equiv p \mod p^r$. So $
q^{k_0t}=(1+p) \mod p^r$.  Now by multiplying by $ip^j$ we obtain
:
$$(1+p)q^j i=iq^{j+k_0t} \mod p^r,$$ i.e., the cyclotomic class of $i$
is invariant by the
multiplier $M_{p+1}$. 
\end{proof}

The following proposition gives some properties of the set $H(P)$.
\begin{proposition}
\label{prop:norm}
Let $P$ be a Sylow subgroup of $Per(C)$ which contains $T$. Hence the group
$H(P)$ verifies :
\begin{enumerate}

\item $AG(p^r)=N_{S_{p^r}}(<T>) \subset H(P)$,
\item $N_{S_{p^r}}(P) \subset H(P)$.
\end{enumerate}
\end{proposition}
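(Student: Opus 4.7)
The plan is to unpack the definition $H(P)=\{\sigma\in S_{p^r}:\sigma^{-1}T\sigma\in P\}$ and verify the two inclusions one at a time. The real content lies in the identification $AG(p^r)=N_{S_{p^r}}(\langle T\rangle)$; once that equality is in hand, both containments fall out immediately from the fact that $T\in P$.

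For the equality, I would first check directly that every affine transformation lies in the normalizer. Identifying the underlying set with $\mathbb{Z}/p^r\mathbb{Z}$ and writing $T(x)=x+1$, a short computation gives
\[
\tau_{a,b}\,T\,\tau_{a,b}^{-1}(y)=a\bigl(a^{-1}(y-b)+1\bigr)+b=y+a=T^a(y),
\]
so $\tau_{a,b}T\tau_{a,b}^{-1}=T^a\in\langle T\rangle$. Hence $AG(p^r)\subseteq N_{S_{p^r}}(\langle T\rangle)$. Conversely, suppose $\sigma\in N_{S_{p^r}}(\langle T\rangle)$. Then $\sigma T\sigma^{-1}=T^a$ for some $a$ coprime to $p$, and setting $b=\sigma(0)$ one gets
\[
\sigma(k)=\sigma T^k(0)=T^{ak}\sigma(0)=ak+b,
\]
so $\sigma=\tau_{a,b}\in AG(p^r)$. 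This uses only that $\langle T\rangle$ is regular, so the value of $\sigma$ at the single point $0$ pins it down once the conjugation action is fixed.

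The two inclusions in $H(P)$ are then essentially bookkeeping. For (1), pick $\sigma\in N_{S_{p^r}}(\langle T\rangle)$; by what we just showed, $\sigma^{-1}T\sigma$ is a power of $T$. But $T\in P$ and $P$ is a group, hence $\langle T\rangle\subseteq P$, so $\sigma^{-1}T\sigma\in P$ and $\sigma\in H(P)$. For (2), if $\sigma\in N_{S_{p^r}}(P)$, then $\sigma^{-1}P\sigma=P$, and since $T\in P$ we directly have $\sigma^{-1}T\sigma\in P$, that is $\sigma\in H(P)$.

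The only step that is not automatic is establishing $AG(p^r)=N_{S_{p^r}}(\langle T\rangle)$, and even there the obstacle is mild: the inclusion $\subseteq$ is a single conjugation computation, and the reverse inclusion follows from the regularity of $\langle T\rangle$ together with the observation that the image $\sigma(0)$ together with the exponent $a$ in $\sigma T\sigma^{-1}=T^a$ determines $\sigma$ uniquely as an affine map. I anticipate no serious difficulty beyond this point.
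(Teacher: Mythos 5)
Your proof is correct and follows essentially the same route as the paper: once $AG(p^r)=N_{S_{p^r}}(\langle T\rangle)$ is established, both inclusions are immediate from $\langle T\rangle\subseteq P$ and the definition of $H(P)$. The only difference is that the paper cites Leon--Masley--Pless for the normalizer identity, whereas you prove it directly via the conjugation computation and the regularity of $\langle T\rangle$; your version is self-contained but not a different argument.
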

\begin{proof}
 The fact $AG(p^r)=N_{S_{p^r}}(<T>)$ follows from~\cite[p. 710]{huffman}.

Now we consider $\sigma \in N_{S_{p^r}}(<T>)$, then $$\sigma ^{-1} <T>
\sigma = <T>.$$ Since we assumed $<T>\leq P$ , hence $\sigma \in H(P)$.

Let $ \sigma \in N_{S_{p^r}}(P)$, hence $\sigma \in N_{S_{p^r}}(P)$ verifies
$\sigma ^{-1} P \sigma = P$. Since we assumed $<T>\leq P$, then
$\sigma ^{-1} T \sigma \in P$, i.e.
\begin{equation}
 N_{S_{p^r}}(P) \subset H(P).
\end{equation}
\end{proof}

\begin{lemma}
\label{lem:groupe}(\cite[Theorem.5.6]{vanessa})
Let $p$ be an odd prime  and let $q$ be a prime power with $p \nmid
q$. Let $C$ be a cyclic code of length $p^r$ over $\mathbb{F}_q$ and $t_k$ be the order of $q$ modulo
$p^k$ and suppose that $z=1$. For $1\leq k \leq r$, $Per(C)$
contains the group $G_k=\{ \mu_{q^i,c}^{(p^k)} \,|\, 0 \leq i <t_k, 0
\leq c <p^k\}$ which is of order $t_kp^k$. Furthermore, each element of
$H_k =\{  \mu_{q^i,0}^{(p^k)} \,|\, 0 \leq i <t_k \}$ fixes the
idempotent of the code $C$.    
\end{lemma}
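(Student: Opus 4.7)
The plan has three parts: first verify the group structure of $G_k$, then prove $H_k$ fixes the idempotent of $C$, and finally extend to $G_k\subseteq Per(C)$. For the group structure, a direct computation on $j=i+bp^k$ yields the affine composition rule $\mu_{a,c}^{(p^k)}\circ\mu_{a',c'}^{(p^k)}=\mu_{aa',\,ac'+c}^{(p^k)}$, exhibiting $G_k$ as a semidirect product. Since $t_k=\ord_{p^k}(q)$, the set $\{q^i\bmod p^k:0\le i<t_k\}$ is a cyclic subgroup of $(\mathbb{Z}/p^k\mathbb{Z})^\times$ of order $t_k$, so $|G_k|=t_kp^k$.

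For the idempotent invariance, the key claim is that under $z=1$, for every $j=m+bp^k$ with $m\ne 0$, the $\langle q\rangle$-orbit of $j$ in $\mathbb{Z}/p^r\mathbb{Z}$ contains every pair $(m',b')$ with $m'$ in the $\mu_q^{(p^k)}$-orbit of $m$ and $b'\in\mathbb{Z}/p^{r-k}\mathbb{Z}$ arbitrary. Iterating $M_q$ sends $(m,b)\mapsto(qm\bmod p^k,\,qb+\lfloor qm/p^k\rfloor)$, and after $t_k$ steps $m$ returns; writing $q^{t_k}\equiv 1+\alpha p^k\pmod{p^r}$, a lifting-the-exponent computation using $z=1$ (via Lemma~\ref{lem:ord}) gives $\gcd(\alpha,p)=1$, so the induced affine action $b\mapsto b(1+\alpha p^k)+\alpha m$ on $\mathbb{Z}/p^{r-k}\mathbb{Z}$ has full orbit of size $p^{r-k}$. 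A cardinality count confirms the $\langle q\rangle$-orbit of $j$ has size $\ord_{p^{r-v_p(m)}}(q)=p^{r-v_p(m)-1}t$, matching the target set; for $m=0$, $\mu_{q^i,0}^{(p^k)}$ fixes $j$ trivially. Since the coefficients $e_j$ of $e(x)$ are constant on $q$-cyclotomic cosets modulo $p^r$, $e_{m+bp^k}$ depends only on the $\mu_q^{(p^k)}$-orbit of $m$, so every $\mu_{q^i,0}^{(p^k)}$ fixes $e(x)$.

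This $b$-independence yields a factorization $e(x)=E(x)P(x)$ in $R=\mathbb{F}_q[x]/(x^{p^r}-1)$, where $E(x)=\sum_{m=0}^{p^k-1}e_{m}x^{m}$ and $P(x)=\sum_{b=0}^{p^{r-k}-1}x^{bp^k}$ satisfies $(x^{p^k}-1)P(x)=0$ in $R$; similarly $T^{i}e=x^{i}e=E_{i}(x)P(x)$ with $E_{i}(x)=x^{i}E(x)\bmod(x^{p^k}-1)$. A direct coefficient check shows $\mu_{1,c}^{(p^k)}(F(x)P(x))=x^{c}F(x)P(x)$ in $R$ for any $F$ of degree $<p^k$, whence $\mu_{1,c}^{(p^k)}(T^{i}e)=T^{c+i}e\in C$. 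Since $\{T^{i}e\}_{i}$ spans $C$, this gives $\mu_{1,c}^{(p^k)}\in Per(C)$, and combined with $H_k\subseteq Per(C)$ yields $G_k\subseteq Per(C)$. The main obstacle is the orbit analysis of the second step: establishing $\gcd(\alpha,p)=1$ and then showing the affine action on $b$ achieves full orbit $p^{r-k}$ uniformly in $v_p(m)\in\{0,\ldots,k-1\}$ requires careful lifting-the-exponent computations, which form the technical heart of the proof.
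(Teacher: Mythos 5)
The paper itself gives no proof of this lemma: it is quoted directly from Huffman--Job--Pless \cite[Theorem 5.6]{vanessa}, so there is no internal argument to compare yours against. Judged on its own merits, your first two steps are sound. The composition rule $\mu_{a,c}^{(p^k)}\circ\mu_{a',c'}^{(p^k)}=\mu_{aa',ac'+c}^{(p^k)}$ and the order count $|G_k|=t_kp^k$ are correct, and your orbit analysis for $H_k$ is right: for $j=m+bp^k$ with $m\neq 0$ the $q$-cyclotomic coset of $j$ reduces modulo $p^k$ into the multiplicative orbit of $m$, and under $z=1$ the cardinality count $p^{\,k-v_p(m)-1}t\cdot p^{\,r-k}=\ord_{p^{\,r-v_p(m)}}(q)$ forces equality of the two sets, so $e_{\mu(j)}=e_j$ for every $\mu\in H_k$ and $H_k$ fixes the idempotent.

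The gap is in your last step. The factorization $e(x)=E(x)P(x)$ with $E(x)=\sum_{m=0}^{p^k-1}e_mx^m$ requires $e_{m+bp^k}=e_m$ for \emph{all} $m$, but your orbit argument only delivers this for $m\neq 0$: on the subgroup $p^k\mathbb{Z}/p^r\mathbb{Z}$ the coefficients $e_{bp^k}$ are constant only on the cyclotomic cosets contained in that subgroup and need not all equal $e_0$. Concretely, take $p=3$, $q=2$, $r=2$, $k=1$ (so $t=2$ and $z=1$) and $C=\langle x^3+x^6\rangle\subset\mathbb{F}_2[x]/(x^9-1)$, whose generating idempotent is $e=x^3+x^6$; here $e_0=0$ while $e_3=e_6=1$, so $E(x)=0$ and $E(x)P(x)=0\neq e$. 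The subsequent identity $\mu_{1,c}^{(p^k)}(T^ie)=T^{c+i}e$ also fails there: $\mu_{1,1}^{(3)}=(0\,1\,2)(3\,4\,5)(6\,7\,8)$ sends $x^2e=x^5+x^8$ to $x^3+x^6=e$, not to $x^3e=1+x^6$. Both images happen to lie in $C$, so the lemma survives, but your argument does not prove it: showing that the translations $\mu_{1,c}^{(p^k)}$, and hence all of $G_k$, preserve $C$ cannot be reduced to the single codeword $e$ and its shifts in this way, and needs the ideal (zero-set) structure of $C$ as exploited in \cite{vanessa}. That is the missing piece.
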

\begin{theorem}
\label{th:gr.sy}
Let $C$ be a cyclic  code of length $p^r, r >1$. Hence a Sylow $p$-sous
group of $Per(C)$ is of order $p^s$, such that 
\begin{equation}
r \leq s \leq
p^{r-1}+p^{r-2}+ \ldots  +1.
\end{equation}
Furthermore if $z=1$, $s$ verifies:
$$2r-1 \leq s \leq p^{r-1}+ p^{r-2}+ \ldots 1.$$    
\end{theorem}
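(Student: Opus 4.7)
The plan is to establish the two chains of inequalities separately, each by exhibiting explicit subgroups of the right order or by embedding into a larger Sylow subgroup.

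For the lower bound $r \le s$, I would simply observe that since $C$ is cyclic the shift $T \in Per(C)$ has order $p^r$, so $\langle T \rangle$ is a $p$-subgroup of $Per(C)$ of order $p^r$. By Sylow's theorem, $\langle T \rangle$ is contained in some Sylow $p$-subgroup $P$ of $Per(C)$, so $|P| = p^s \ge p^r$, i.e.\ $s \ge r$.

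For the upper bound $s \le p^{r-1}+p^{r-2}+\cdots+1$, I would use the fact that $Per(C) \le S_{p^r}$, so $P$ is contained in a Sylow $p$-subgroup of $S_{p^r}$. By Legendre's formula
\[
v_p\bigl((p^r)!\bigr) \;=\; \sum_{i=1}^{r}\Bigl\lfloor \frac{p^r}{p^i}\Bigr\rfloor \;=\; p^{r-1}+p^{r-2}+\cdots+p+1,
\]
so $|P|$ divides $p^{\,p^{r-1}+\cdots+1}$, giving the stated upper bound. The same upper bound applies in the $z=1$ case since it is already a bound on Sylow subgroups of the ambient symmetric group.

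For the sharper lower bound $s \ge 2r-1$ under the hypothesis $z=1$, I would invoke the two preceding lemmas. By Lemma~\ref{lem:ord}(1), $z=1$ implies $\ord_{p^r}q = p^{r-1}t$, where $t=\ord_p q$. Applied with $k=r$, Lemma~\ref{lem:groupe} then gives that $Per(C)$ contains the subgroup $G_r=\{\mu_{q^i,c}^{(p^r)} \,|\, 0\le i<t_r,\; 0\le c<p^r\}$ of order $t_r\cdot p^r = p^{r-1}t\cdot p^r = t\,p^{2r-1}$. Since $t = \ord_p q$ divides $p-1$, we have $\gcd(t,p)=1$, so a Sylow $p$-subgroup of $G_r$ has order exactly $p^{2r-1}$. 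Because $G_r\le Per(C)$, this Sylow $p$-subgroup embeds into a Sylow $p$-subgroup $P$ of $Per(C)$, forcing $s\ge 2r-1$.

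The main delicate point is just ensuring that the constructions of the previous lemmas really produce a $p$-subgroup of the claimed order; the coprimality $\gcd(t,p)=1$ (which follows from $t\mid p-1$) is what lets us strip off the $t$ factor and keep the full $p^{2r-1}$ as a $p$-power, and this is the only nontrivial arithmetical point in the argument.
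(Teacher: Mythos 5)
Your proposal is correct and follows essentially the same route as the paper: the upper bound comes from embedding $P$ in a Sylow $p$-subgroup of $S_{p^r}$ (you compute its order via Legendre's formula where the paper cites the iterated wreath product structure, but this is the same standard fact), the trivial lower bound comes from $\langle T\rangle\le P$, and the bound $s\ge 2r-1$ for $z=1$ comes from combining Lemma~\ref{lem:ord} and Lemma~\ref{lem:groupe} to place $G_r$ of order $t_1p^{2r-1}$ inside $Per(C)$. Your explicit remark that $\gcd(t,p)=1$ because $t\mid p-1$ is a small but welcome clarification of a step the paper leaves implicit.
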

\begin{proof}
Let $P$ be a Sylow $p$-subgroup of $Per(C)$, hence $P$ is a $p$-group of $S_{p^r}$.
From Sylow's Theorem  $P$ is contained in a Sylow $p$-subgroup of
$S_{p^r}$. It is well known that a Sylow $p$-subgroup of $S_{p^r}$ is of order
$p^{p^{r-1}+p^{r-2}+ \ldots  +1}$. It is isomorphic with the wreath
product $\mathbb{Z}_p \wr \ldots \wr \mathbb{Z}_p$~\cite[Kalu\v{z}nin's Theorem]{robinson}, hence $r \leq s
\leq p^{r-1}+p^{r-2}+ \ldots  +1$.

If $z=1$, from Lemma~\ref{lem:groupe} we have that $Per(C)$ contains
the group $G_r$, of order $t_rp^{r}$. From Lemma~\ref{lem:ord} we
have that $t_r=t_1p^{r-1}$, then we have  $|G_r|=t_1p^{2r-1}$. Hence 
 $p^{2r-1}$ divides $|Per(C)|$, which means that $Per(C)$ contains a
$p$-subgroup of order at least $p^{2r-1}$.  
\end{proof}

For $n<p$, we define the following subsets of $S_{p^r}$:\\
$Q^n=\{f:Z_{p^r} \rightarrow Z_{p^r} | f(x)=\sum_{i=0}^{n}a_ix^i,\\
a_i \in Z_{p^r}\text{ for each }i, p  \text{ is relatively prime to }
a_1, \text{ and }\\p^{r-1}
\text{ divides }a_i\text{ for }i=2,3,\ldots, n\}.$
$$Q_1^n=\{f \in Q^n | f(x)=\sum_{i=0}^{n}a_ix^i, \text{ with }a_1 \equiv 1
\mod p^{r-1} \}.$$
The set $Q^n$ and $Q_1^n$ are subgroups of $S_{p^r}$~\cite[Lemma 2.1]{brand}.
\begin{theorem}
\label{lem:Qgroup}
Let $Per(C)$ be the permutation group of a cyclic code of length
$p^r$ and $P$ a Sylow $p$-subgroup of $Per(C)$ of order $p^s$ such that
$T \in P$. If $r\leq s\leq p+1$, then $H(P)$ is a group and we have:

$(a)$ if $s=r, P=<T>$, and $H(P)=AG(p^r)$,

$(b)$ if $s >r$, $P=Q_1^{s-2}$, and $H(Q_1^{s-2})=Q^{s-1}$.
\end{theorem}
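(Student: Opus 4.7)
The plan is to split the argument according to the two cases, using as tools Brand's Lemma 2.1 (which gives that $Q^n$ and $Q_1^n$ are subgroups of $S_{p^r}$) and Proposition~\ref{prop:norm} (giving $AG(p^r)=N_{S_{p^r}}(\langle T\rangle)\subset H(P)$ and $N_{S_{p^r}}(P)\subset H(P)$). Throughout one exploits that $T$ is the polynomial map $x\mapsto x+1$, so $T\in Q_1^n$ for every~$n$.

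Part (a) will be very short. The hypothesis $|P|=p^r$ together with $\langle T\rangle\le P$ and $|\langle T\rangle|=p^r$ forces $P=\langle T\rangle$. A permutation $\sigma$ lies in $H(P)$ iff $\sigma^{-1}T\sigma\in\langle T\rangle$; but $\sigma^{-1}T\sigma$ has order $p^r$, so it generates $\sigma^{-1}\langle T\rangle\sigma$, which must equal $\langle T\rangle$. Therefore $H(P)\subseteq N_{S_{p^r}}(\langle T\rangle)=AG(p^r)$, and the reverse inclusion is given by Proposition~\ref{prop:norm}; so $H(P)=AG(p^r)$ is a group.

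For part (b) the strategy has two steps. First, identify $P$ as $Q_1^{s-2}$. The subgroups $Q_1^0\subset Q_1^1\subset\cdots\subset Q_1^{p-1}$ form a chain of $p$-subgroups of $S_{p^r}$ whose orders grow by a factor of $p$ at each step and all of which contain $T$. Given $\langle T\rangle\le P$ and $|P|=p^s$ with $r<s\le p+1$, a Sylow-conjugacy argument in $S_{p^r}$ (after, if necessary, replacing the code by an equivalent one, which affects neither whether $H(P)$ is a group nor what set it is) matches $P$ with the unique member of this chain having the right order, namely $Q_1^{s-2}$. Second, compute $H(Q_1^{s-2})$ and show it equals $Q^{s-1}$. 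For the inclusion $Q^{s-1}\subseteq H(Q_1^{s-2})$ the plan is a direct polynomial calculation: take $\sigma\in Q^{s-1}$, write $\sigma(x)=\sum_{i=0}^{s-1}a_ix^i$, observe that $\sigma^{-1}$ is again of the same polynomial form modulo $p^r$ by finite Taylor inversion (valid because $s-1<p$, so $1,\ldots,s-1$ are units mod $p$), and then expand $(\sigma^{-1}T\sigma)(x)=\sigma^{-1}(\sigma(x)+1)$ to a polynomial of degree at most $s-2$ whose linear coefficient is $\equiv 1\pmod{p^{r-1}}$ and whose higher coefficients are divisible by $p^{r-1}$. For the reverse inclusion, an element of $H(Q_1^{s-2})$ is analysed coefficient by coefficient through the $Q_1^{s-2}$-conditions on $\sigma^{-1}T\sigma$ until it is forced into $Q^{s-1}$. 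Since $Q^{s-1}$ is already known to be a group by Brand's Lemma 2.1, the identification $H(P)=Q^{s-1}$ yields that $H(P)$ is a group.

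The main obstacle is the second step of part (b). One must simultaneously keep track of three things during the polynomial expansion of $\sigma^{-1}T\sigma$: that the degree drops from $s-1$ to $s-2$, that the linear coefficient lands in $1+p^{r-1}\mathbb{Z}_{p^r}$, and that every higher coefficient absorbs a factor of $p^{r-1}$. The role of the hypothesis $s\le p+1$ is precisely to keep the relevant denominators invertible modulo~$p$, so that the Taylor calculus for polynomial inverses closes up inside the sets $Q^{s-1}$ and $Q_1^{s-2}$; without it the whole argument would fail.
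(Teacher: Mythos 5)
Your part (a) is correct and is essentially the paper's own argument: $|P|=p^r$ together with $\langle T\rangle\le P$ forces $P=\langle T\rangle$; the order-$p^r$ element $\sigma^{-1}T\sigma$ must generate $\langle T\rangle$, so $H(P)\subseteq N_{S_{p^r}}(\langle T\rangle)=AG(p^r)$; and Proposition~\ref{prop:norm} supplies the reverse inclusion.

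Part (b), however, has two genuine gaps, and they sit exactly where the real work is. First, the identification of $P$ with $Q_1^{s-2}$ cannot be obtained by ``a Sylow-conjugacy argument'': Sylow's theorem makes \emph{maximal} $p$-subgroups conjugate, but $p$-subgroups of $S_{p^r}$ of a given order are in general not all conjugate, so knowing $|P|=p^s$ and $T\in P$ does not match $P$ to a member of your chain. Moreover the orders do not even agree: counting admissible coefficient tuples (distinct for degree $\le p-1$) gives $|Q_1^{n}|=p^{r+n}$, so $|Q_1^{s-2}|=p^{r+s-2}$, which equals $p^s$ only when $r=2$; hence ``the unique member of the chain having the right order'' is not $Q_1^{s-2}$ when $r>2$, and your step~1 is internally inconsistent with the conclusion it is meant to reach. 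Second, for the inclusion $H(Q_1^{s-2})\subseteq Q^{s-1}$ you propose to analyse an element of $H(Q_1^{s-2})$ ``coefficient by coefficient,'' but such an element is an \emph{arbitrary} permutation $\sigma\in S_{p^r}$ subject only to $\sigma^{-1}T\sigma\in Q_1^{s-2}$; it has no coefficients to analyse. Proving that such a $\sigma$ is forced to be a polynomial map is precisely the hard content of Brand's Lemma~3.2 and Theorem~2.2 and of Huffman--Job--Pless Lemma~2.4 and Theorem~2.1(b), which is exactly what the paper invokes by citation at this point rather than reproving. Your forward inclusion $Q^{s-1}\subseteq H(Q_1^{s-2})$ via Taylor expansion of $\sigma^{-1}(\sigma(x)+1)$ is sound (for $s-1<p$, as the definition of $Q^{s-1}$ already requires), but it is the easy half. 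As written, the proposal does not establish part (b); you must either import Brand's Theorem~2.2 as the paper does, or supply the substantial argument that any permutation conjugating $T$ into a polynomial $p$-group is itself polynomial.
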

\begin{proof}
From Theorem~\ref{th:gr.sy} we have that $r \leq s \leq p^{r-1}+ p^{r-2}+
\ldots 1.$ In the case $r=s$, it is obvious to have $P=<T>$, and
$H(<T>)=N_{S_n}(<T>)$. Proposition~\ref{prop:norm} gives $H(<T>)=AG(p^r)$. Furthermore, we remark that the part $(b)$
of~\cite[Theorem 2.1]{vanessa} can be generalized
for the case $p^r$. Since it is based essentially on the \cite[Lemma. 2.4]{vanessa} and the~\cite[Lemma. 3.2]{brand}( the last Lemma was
given for the length $p^r$). The hypothesis $s\leq
p+1$ is to assure that $Q_1^{s-1} \leq Per(C)$. Hence the result
follows from~\cite[Theorem 2.2]{brand}.    
\end{proof} 

As noticed Brillhart et al.~\cite{brillhart}, it is quite unusual to
have $z >1$. Hence the importance of the following result. 
\begin{theorem}
Let $C$ be a cyclic code of length $n=p^r$, over $\mathbb{F}_q$, such
that $gcd(p,q)=1,z=1$ and $P$ be a Sylow
$p$-subgroup of $Per(C)$ of order $p^{2r-1}$, then we have :
\begin{enumerate}
\item $P$ is normal in $G_r$,
\item $N_{S_n}(P)=G_r$.
\item $H(P)=\{\tau \in S_n | \tau:i \mapsto q^{ij}
 a+q^{(i-1)j}c+q^{(i-2)j}c+\ldots+c \, , 0\leq j<tp^{r-1},\, a,c\in Z_n \}$
\end{enumerate}
\end{theorem}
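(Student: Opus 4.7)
The plan is to prove the three assertions in sequence, leaning on Lemma~\ref{lem:ord}, Lemma~\ref{lem:groupe}, and Proposition~\ref{prop:norm}, and using throughout the identification $\mu_{q^i,c}^{(p^r)}=\tau_{q^i,c}$ that follows from the definition.

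For part (1), Lemma~\ref{lem:ord} gives $\ord_{p^r}q=tp^{r-1}$, hence $|G_r|=t_rp^r=tp^{2r-1}$. Since $\gcd(t,p)=1$, any subgroup of order $p^{2r-1}$ is a Sylow $p$-subgroup of $G_r$. Writing $G_r=\langle T\rangle\rtimes\langle q\rangle$, the translation subgroup $\langle T\rangle$ of order $p^r$ is normal in $G_r$, and the quotient $\langle q\rangle$ is cyclic of order $tp^{r-1}$, so it has a unique Sylow $p$-subgroup $\langle q^t\rangle$ of order $p^{r-1}$, which is therefore normal. The preimage of $\langle q^t\rangle$ in $G_r$ is a normal subgroup of order $p^{2r-1}$ containing $\langle T\rangle$; by the Sylow count it must be $P$. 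Hence $P\triangleleft G_r$.

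For part (2), $G_r\subseteq N_{S_n}(P)$ is immediate from part (1). For the reverse inclusion, my plan is to exhibit a canonical subgroup of $P$ that every $\sigma\in N_{S_n}(P)$ must stabilise, in order to force $N_{S_n}(P)\subseteq N_{S_n}(\langle T\rangle)=AG(p^r)$ (Proposition~\ref{prop:norm}). A naive candidate is $\langle T\rangle$ itself, but $P$ contains roughly $p^{r-1}$ cyclic subgroups of order $p^r$, so $\langle T\rangle$ is not characteristic; instead I would work with the Fitting/commutator series of $P$ and its intersection with $\langle T\rangle$, which is characteristic and still distinguishes the translation flag. Once $N_{S_n}(P)\subseteq AG(p^r)$ is established, a direct conjugation calculation $\tau_{\alpha,\beta}^{-1}\tau_{q^{tj},c}\tau_{\alpha,\beta}=\tau_{q^{tj},\,\alpha^{-1}((q^{tj}-1)\beta+c)}$ shows that the rotation part is preserved, and insisting that \emph{all} of $P$ returns to $P$ pins $\alpha$ down to $\langle q\rangle$, giving $\tau_{\alpha,\beta}\in G_r$. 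The main obstacle is the first step: isolating the right characteristic subgroup of $P$ that forces every normaliser element into $AG(p^r)$, since $\langle T\rangle$ alone is not enough.

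For part (3), the key observation is that $\sigma\in H(P)$ means $\rho:=\sigma^{-1}T\sigma\in P$, and $\rho$ must be a $p^r$-cycle because $T$ is. Every element of $P$ has the form $\rho=\tau_{q^{tj'},c}$ with $0\le j'<p^{r-1}$ and $c\in\mathbb{Z}_n$, so $\rho$-being-a-$p^r$-cycle translates into an explicit coprimality condition on $(j',c)$. The conjugation equation $T\sigma=\sigma\rho$ rewrites as the recursion $\sigma(i+1)=\rho(\sigma(i))$, whose unique solution with $\sigma(0)=a$ is
\begin{equation*}
\sigma(i)=\rho^{i}(a)=(q^{tj'})^{i}a+c\bigl((q^{tj'})^{i-1}+(q^{tj'})^{i-2}+\cdots+1\bigr),
\end{equation*}
matching the stated formula after setting $j=tj'$. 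Conversely, each admissible triple $(a,j,c)$ produces a genuine $\sigma\in S_n$ (bijectivity coming from $\rho$ being a single $p^r$-cycle) and satisfies $\sigma^{-1}T\sigma=\rho\in P$, so belongs to $H(P)$. The map $\sigma\mapsto(\rho,\sigma(0))$ is a bijection between $H(P)$ and pairs (cycle $\rho\in P$, starting value $a\in\mathbb{Z}_n$), which also accounts for the size of $H(P)$. The delicate point here is matching the stated index range $0\le j<tp^{r-1}$ with the membership condition $\rho\in P$ (i.e.\ the implicit constraint $t\mid j$) and checking bijectivity of $\sigma$ for the valid parameter choices.
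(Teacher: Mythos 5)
Parts (1) and (3) of your proposal are sound and essentially follow the paper's route. For (1) both arguments come down to showing that $G_r$ has a unique, hence normal, Sylow $p$-subgroup: the paper does this by a Sylow-counting argument on $|G_r|=t_1p^{2r-1}$, you by identifying that subgroup as the preimage of the Sylow $p$-subgroup of the cyclic quotient $G_r/\langle T\rangle$; both versions tacitly use that one may take $P\leq G_r$, which you should state. For (3) your recursion $\sigma(i)=\rho^{i}(\sigma(0))$ with $\rho=\sigma^{-1}T\sigma\in P$ is exactly the paper's computation, and your extra care about the converse (which parameter triples actually yield bijections) and about the implicit constraint $t\mid j$ hidden in the stated range $0\leq j<tp^{r-1}$ are genuine improvements: the paper checks neither point.

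The genuine gap is in part (2), and you flag it yourself: you never establish the inclusion $N_{S_n}(P)\subseteq G_r$. Your plan --- find a characteristic subgroup of $P$ whose normalizer forces membership in $AG(p^r)$ --- stalls exactly where you say it does: $\langle T\rangle$ is not obviously characteristic in $P$ (elements such as $T^{c}M_{q^{t}}$ with $p\nmid c$ also have order $p^{r}$), and natural candidates like $[P,P]=\langle T^{p}\rangle$ have normalizers in $S_n$ far larger than $AG(p^r)$, so the route through $N_{S_n}(\langle T\rangle)$ does not close as described. The paper takes a different, more direct tack: for $\tau\in N_{S_n}(P)$ and each $\sigma=T^{c}M_{q^{j}}\in P$ there is $\sigma'=T^{c'}M_{q^{j'}}\in P$ with $\tau\sigma=\sigma'\tau$; evaluating this identity of maps at the point $0$ gives $\tau(c)=q^{j'}\tau(0)+c'$, and since $T^{c}\in P$ for \emph{every} $c\in Z_n$ this constrains $\tau(c)$ for all $c$ simultaneously and is used to conclude $\tau\in G_r$; then $N_{G_r}(P)=G_r$ from part (1) together with the sandwich $G_r=N_{G_r}(P)\leq N_{S_n}(P)\leq G_r$ finishes the claim. (Even the paper's version of this step is terse --- the parameters $j',c'$ a priori depend on $c$, so some additional argument is needed to see that $\tau$ is a single affine-type map --- but it is the intended mechanism, and it is precisely the step your proposal is missing.)
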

\begin{proof}
In this case we can assume that $P \leq G_r$, because we have $T= \mu_{1,1}^{(p^r)}\in
G_r$. Let $N$ be the number of the Sylow 
$p$-subgroup in $G_r$. From the Sylow's Theorem,
$N\equiv 1 \mod p$ and $N$ divide $ |G_r |=t_1p^{2r-1}$. Assume that $N=1+kp,$
with $ k > 0$ and
$N\alpha=t_1p^{2r-1}, \alpha \geq 1$. Hence $(1+kp)\alpha=
t_1p^{2r-1}$, but $((1+kp),p^{2r-1})=1$. Then $(1+kp) \alpha'=t_1$,
absurd. Because $t_1 | (p-1)$ and $\alpha' \geq 1$, hence
$N=1$. Since $N=1$ and the Sylow subgroups are conjugate this gives
that $P$ is normal in $G_r$. 

Now, we will prove that $N_{S_n}(P)=G_r$ :

Let $\sigma \in P \leq G_r$, then we can write $\sigma =
\mu_{q^j,c}^{(p^r)}=T^c  M_{q^j}$, with $T$ the shift and
$M_{q^j}$ is a multiplier.  Let
$\tau \in N_{S_n}(P)$, hence there exists $\sigma' = T^{c'}
M_{q^{j'}} \in P$
such that  $\tau \sigma = \sigma ' \tau$,  this is equivalent to $$\tau  T^{c} 
 M_{q^j}= T^{c'}  M_{q^{j'}} \tau. $$
For $0$ we have :
 $$\tau  T^{c} M_{q^j} (0)= \tau  T^{c}(0)=\tau(c),$$ 
hence we have:

$$\tau (c)= T^{c'} M_{q^{j'}}\tau(0)= q^{j'}\tau (0)+c'.$$

Then, $\tau (c)= cq^{j'} +d$, where $d=c'+q^{j'}(\tau (0)-c)$. This implies $N_{S_n}(P)
\leq G_r$. 

 Furthermore $P$ is normal in $G_r$, which gives that $
 N_{G_r}(P)=G_r$. But
$N_{G_r}(P) \leq N_{Per(C)}(P)  \leq  N_{S_n}(P)\leq G_r$, then
$$N_{S_n}(P)=G_r.$$
From the Lemma~\ref{prop:norm} we have that $N_{S_n}(P) \subset
H(P)$. Hence $N_{S_n}(P)=G_r \subset H(P)$.

 Now, let $\tau \in H(P)$, hence there exist $c$ and $j$ such that :
 $$\tau T \tau^{-1} =  T^c   M_{q^j} \iff  \tau T = T^c
 M_{q^j}\tau.$$That is because $P\leq G_r$.
For $0$ we obtains :\\
$\tau T(0)=\tau(1)=q^j \tau(0)+c$.\\
$\tau T(1)=\tau (2)=T^c M_{q^j} (\tau(1))=q^j\tau(1)+c$\\
$\tau T(2)=\tau (3)=T^c M_{q^j}(\tau(2))=q^j\tau(2)+c $\\
$\vdots$\\
$\tau (i)=q^{ij} (\tau(0))+q^{(i-1)j}c+q^{(i-2)j}c+\ldots+c$.\\
Then the elements of $H(P)$ are $\tau \in S_n$ such that
  $\tau (i)=q^{ij} a+q^{(i-1)j}c+q^{(i-2)j}c+\ldots+c.$
\end{proof}
\section{The Equivalence for the Quasi-Cyclic Codes}
A code $C$ of length $n=lm$ is said to be quasi-cyclic of order $l$ over
$\mathbb{F}_q$, if it is invariant by the
permutation
 \begin{equation}
\begin{split}
T^l : Z_n & \longrightarrow  Z_n \\
          i &\longmapsto i +l \mod n.
\end{split}
\end{equation}
We consider the cycles $\sigma _i=(i, i+l,i+2l \ldots , i+
(m-1)l)$ for $0 \leq i \leq l-1$. The cycles $\sigma _i$ have order $m$. Furthermore we have
\begin{equation}
\label{eq:tl} 
T^l= \sigma _0 \ldots \sigma_{l-1}.
\end{equation}
 This gives that 
$|(T^l)|=lcm(|\sigma _0|, \ldots,| \sigma_{l-1}|)=m$.

\begin{proposition}
\label{prop:aff.qua}
Let $n=lm$, with $(m,l)=1$ and  $<T^l>$ the subgroup of $S_n$
generated by the permutation $T^l$. Hence the normalizer of $<T^l>$ in $S_n$ contains the following groups.
\begin{enumerate}
 \item $Q=< \sigma_0, \ldots, \sigma_{l-1},T>.$
\item $AG(n)$.
\end{enumerate}
\end{proposition}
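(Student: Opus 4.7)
The plan is to verify each of the two containments directly by checking that the explicitly listed generators normalize $\langle T^l\rangle$. Since the normalizer is a subgroup of $S_n$, it is closed under products and inverses, so containment of $Q$ and of $AG(n)$ reduces to containment of their generators.

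For part (1), I would first note that $\sigma_0,\ldots,\sigma_{l-1}$ are pairwise disjoint cycles: the support of $\sigma_i$ is the arithmetic progression $\{i,i+l,\ldots,i+(m-1)l\}$, i.e.\ the residue class of $i$ modulo $l$, and these classes partition $\{0,\ldots,n-1\}$. Disjoint permutations commute, so each $\sigma_i$ commutes with the product $\sigma_0\cdots\sigma_{l-1}$, which equals $T^l$ by \eqref{eq:tl}; and $T$ clearly commutes with its own power $T^l$. Therefore every generator of $Q$ lies in the centralizer $C_{S_n}(T^l)$, which is contained in $N_{S_n}(\langle T^l\rangle)$, giving $Q\subseteq N_{S_n}(\langle T^l\rangle)$.

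For part (2), I would compute the conjugate of $T^l$ by an arbitrary $\tau_{a,b}\in AG(n)$ directly. For $x\in\mathbb{Z}/n\mathbb{Z}$ one has $(\tau_{a,b}^{-1}\,T^l\,\tau_{a,b})(x)=a^{-1}\bigl((ax+b)+l-b\bigr)=x+a^{-1}l\pmod n$, so the conjugate is the power $T^{a^{-1}l}=(T^l)^{a^{-1}}$. The only subtle point is that this must genuinely lie in $\langle T^l\rangle$, which has order $m$; here the hypothesis $(a,n)=1$ together with $n=lm$ forces $(a,m)=1$, so $a^{-1}$ is well defined modulo $m$ and the conjugate equals $(T^l)^k$ with $k\equiv a^{-1}\pmod m$, an element of $\langle T^l\rangle$. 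Hence each $\tau_{a,b}$ normalizes $\langle T^l\rangle$, and $AG(n)\subseteq N_{S_n}(\langle T^l\rangle)$.

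I do not foresee a real obstacle: the entire argument reduces to the commutativity of disjoint cycles in part (1) and a one-line affine conjugation identity in part (2). The only book-keeping detail is keeping track of exponents modulo $n$ versus modulo $m=|T^l|$, which is handled by the observation that $(a,n)=1$ implies $(a,m)=1$.
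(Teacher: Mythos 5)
Your proof is correct. Part (1) is essentially the paper's argument: both of you observe that the $\sigma_i$ are disjoint (hence commuting) cycles whose product is $T^l$, and that $T$ commutes with its own power, so all generators of $Q$ lie in the centralizer of $T^l$, a fortiori in the normalizer. For part (2) you take a genuinely more direct route. The paper decomposes $\tau_{a,b}=\tau_{1,b}\tau_{a,0}$ and invokes the relabelling formula $S\sigma S^{-1}=S(\sigma_0)\cdots S(\sigma_{l-1})$ for a product of disjoint cycles, tracking how $\tau_{a,0}$ raises each $\sigma_i$ to the $a$-th power while permuting the indices and how $\tau_{1,b}$ merely permutes the cycles among themselves, before reassembling the conjugate as $T^{la}$. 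Your one-line pointwise computation $\bigl(\tau_{a,b}^{-1}T^l\tau_{a,b}\bigr)(x)=x+a^{-1}l$ reaches the same conclusion (with the inverse exponent, as expected from conjugating in the opposite direction) without any of that bookkeeping, and your remark that the conjugate subgroup is contained in $\langle T^l\rangle$ and hence equal to it by order (or, as you phrase it, that $(a,n)=1$ forces $(a,m)=1$ so the exponent is a unit mod $m=|T^l|$) closes the only point one could quibble about. What the paper's longer route buys is explicit information about how $AG(n)$ acts on the individual cycles $\sigma_i$, which is reused in spirit elsewhere in that section; what your route buys is brevity and less room for error. Neither argument actually uses the hypothesis $(m,l)=1$.
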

\begin{proof}
This is obvious that $T\in N_{S_n}(<T^l>)$. Now we consider $\sigma_i
\in Q,$ from the relation (\ref{eq:tl}) we have $\sigma_0
\ldots \sigma_{l-1}=T^l$. Furthermore the cycles $\sigma_i$ are
disjoints, then commute. Hence $\sigma_i^{-1}
T^l \sigma_{i}=T^l$.

We consider the affine transformation $\tau _{a,b} \in AG(n)$
proving that $\tau _{a,b} \in N_{S_n}(<T^l>)$ is equivalent to prove
the existence of an $\alpha \in \mathbb{N}^*$ such that, $$\tau
_{a,b}T^l\tau _{a,b}^{-1}=T^{l\alpha}.$$
The permutation $\tau _{a,b}$ can be 
decomposed  as follows,
$\tau _{a,b}=\tau _{1,b}\tau _{a,0}.$ 
Which gives with (\ref{eq:tl}) the following equality :
\begin{equation}
\label{eq:tau}
 \tau_{a,b}T^l\tau _{a,b}^{-1}=  \tau _{1,b}\tau _{a,0}\sigma _0 \ldots \sigma_{l-1}.
\tau _{a,0}^{-1}\tau _{1,b}^{-1}.
\end{equation}
This is well known~\cite[Lemma. 5.1]{hall} if $\sigma= \sigma _0\ldots
\sigma_{l-1}$ is a product of the disjoint cycles and $S$ is a
permutation of $S_n$. Hence
$S\sigma S^{-1}= S(\sigma_0)S(\sigma_1) \ldots S(\sigma_{l-1})$.
For $r_a= a \mod l$ we obtains that $\tau _{a,0}(\sigma_i)=\sigma
 ^a_{ir_a}$. This gives :
$$\tau _{a,0}(\sigma_0)\tau _{a,0}(\sigma_1) \ldots \tau
 _{a,0}(\sigma_{l-1})=\sigma_0^a\sigma_{ra}^a \ldots \sigma_{ra(l-1)}^a=T^{la}.
$$
For $r_b=b \mod l$, we obtain $$ \tau _{1,b}\sigma _i\tau
 _{1,b}^{-1}=\tau_{i+r_b},$$ hence $$\tau _{1,b} T^l
 \tau _{1,b}^{-1}=\prod_{i=0}^{l-1} \sigma _{i+r_b}=T^l.$$
Finally we obtains :
$$\tau_{a,b}T^l\tau _{a,b}^{-1}=\tau _{1,b}\tau _{a,0}T^l\tau _{a,0}^{-1}\tau _{1,b}^{-1}=\tau_{1,b}T^{la}\tau_{1,b}^{-1}=T^{la}. $$
This gives $\alpha =a,$ hence $\tau_{a,b} \in N_{S_n}(<T^l>).$ 
\end{proof}

In~\cite{chabot} C. Chabot gave explicitly the group $N_{S_n}(<T^l)>$.
\subsection{Quasi-Cyclic Codes of length $p^rl$}
In the following, we consider the quasi-cyclic codes of length
$n=p^rl$, with $p$ a prime number which verifies $(p,l)=(p,q)=1$.
In this case $<T^l> \leq Per(C)$ is a subgroup of order $p^r$.
Hence it is contained in a Sylow $p$-subgroup $P$.
\begin{lemma}
Let $C$ and $C'$ two quasi-cyclic codes of length $n=p^rl$ and
$P$ a Sylow $p$-subgroup of $Per(C)$ such that  $T^l \in P$. Hence $C$
and $C'$  are equivalent only if they are equivalent by the elements
of the set
$$H'(P)=\{\sigma \in S_n | \sigma ^{-1} T^l \sigma \in P \}.$$
\end{lemma}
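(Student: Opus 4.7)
The statement is the quasi-cyclic analogue of Brand's Lemma~\ref{lem:brand}, so the plan is to transpose Brand's argument, replacing the full cycle $T$ by $T^l$. First I would reduce to the non-trivial direction: one needs to show that \emph{if} $C$ and $C'$ are equivalent in $S_n$ then they are already equivalent by some element of $H'(P)$. So assume $\sigma \in S_n$ satisfies $\sigma(C)=C'$, and look for $\tau \in H'(P)$ with $\tau(C)=C'$.

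The key observation is conjugation of permutation groups: $Per(C') = \sigma\, Per(C)\, \sigma^{-1}$. Because $C'$ is quasi-cyclic of order $l$, we have $T^l \in Per(C')$, and therefore
\[
\sigma^{-1} T^l \sigma \in Per(C).
\]
Since $(p,l)=1$ and $T$ has order $n=p^r l$, the element $T^l$ has order $p^r$; conjugation preserves order, so $\sigma^{-1}T^l\sigma$ is a $p$-element of $Per(C)$. By Sylow's theorem it is contained in some Sylow $p$-subgroup $P'$ of $Per(C)$.

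Now I invoke Sylow's theorem again inside $Per(C)$: since $P$ and $P'$ are both Sylow $p$-subgroups, there exists $g \in Per(C)$ with $g^{-1} P' g = P$. Set $\tau = \sigma g$. Then
\[
\tau^{-1} T^l \tau = g^{-1}\bigl(\sigma^{-1} T^l \sigma\bigr)g \in g^{-1}P'g = P,
\]
so $\tau \in H'(P)$. Moreover, because $g \in Per(C)$ fixes $C$, we have $\tau(C) = \sigma(g(C)) = \sigma(C) = C'$, which finishes the proof.

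There is no serious obstacle here; the only point that deserves care is the verification that $T^l$ is indeed a $p$-element of order $p^r$, which is exactly where the hypothesis $\gcd(p,l)=1$ stated at the beginning of Section IV.A enters the argument. Everything else is a direct translation of Brand's original argument for cyclic codes.
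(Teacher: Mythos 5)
Your proof is correct and follows essentially the same route as the paper: both transpose Brand's Lemma by noting that $\sigma^{-1}T^l\sigma$ is a $p$-element of $Per(C)$ and then using Sylow conjugacy to adjust $\sigma$ by an element of the relevant permutation group. The only (cosmetic) difference is that you conjugate inside $Per(C)$ and set $\tau=\sigma g$, whereas the paper conjugates inside $Per(C')$ and sets $\gamma=\tau^{-1}\sigma$; the two are mirror images of the same argument.
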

\begin{proof}
Since $C$ and $C'$ are equivalent, hence there exist a permutation
$\sigma \in S_n$ such that $C'=\sigma(C)$. This gives the relation
between the permutation groups $Per(C)$ and $Per(C')$.
\begin{equation}
\label{eq:eq}
Per(C')=\sigma Per(C) \sigma ^{-1}
\end{equation}
Let $P$ be a Sylow subgroup of $Per(C)$, hence from the
relation~\ref{eq:eq} we have $\sigma P \sigma ^{-1}=P^{''}$ is a Sylow
$p$-subgroup of $Per(C')$. From the Sylow Theorem there exists $\tau
\in Per(C')$ such that $\tau P' \tau ^{-1}=P^{''}.$ We can assume that  $<T^l> \leq
P'$, since $<T^l>$ is a $p$-group.
Let $\gamma= \tau ^{-1} \sigma$, then $\gamma$ is an isomorphism between
$C$ and $C'$, because $\gamma(C)=\tau ^{-1} \sigma(C)=\tau ^{-1} C'=C'$
and $\gamma ^{-1} T^l \gamma=  \sigma ^{-1} \tau T^l \tau ^{-1} \sigma
\in \sigma ^{-1} P^{''} \sigma =P$ (because $\tau T^l \tau ^{-1} \in \tau
P' \tau ^{-1}$,) hence $\gamma \in H'(P).$  
\end{proof}

It is obvious that if $P=<T^l>$, then we have
$$N_{S_n}(<T^l>)=H'(<T^l>).$$

The following proposition gives other properties of $H'(P)$.
\begin{proposition}
\label{prop:norm2}
Let $P$ be a Sylow $p$-subgroup of $Per(C)$, hence the group
$H'(P)$ verifies the following proprieties:
\begin{itemize}
\item $N_{S_n}(<T^l>) \subset H'(P)$,
\item $N_{S_n}(P)\subset H'(P)$. 
\end{itemize}
\end{proposition}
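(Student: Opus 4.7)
The plan is to mimic exactly the argument given for Proposition \ref{prop:norm}, replacing the shift $T$ with the quasi-cyclic shift $T^l$ throughout. The key observation that makes both inclusions trivial is the standing hypothesis $T^l \in P$, which implies $\langle T^l \rangle \subseteq P$ since $P$ is a subgroup of $S_n$.

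For the first inclusion, I would take an arbitrary $\sigma \in N_{S_n}(\langle T^l \rangle)$ and unpack the definition of normalizer: $\sigma^{-1} \langle T^l \rangle \sigma = \langle T^l \rangle$. In particular $\sigma^{-1} T^l \sigma \in \langle T^l \rangle \subseteq P$, so by definition $\sigma \in H'(P)$. This yields the inclusion $N_{S_n}(\langle T^l \rangle) \subseteq H'(P)$.

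For the second inclusion, I would take $\sigma \in N_{S_n}(P)$, so that $\sigma^{-1} P \sigma = P$. Because $T^l \in P$, conjugation gives $\sigma^{-1} T^l \sigma \in P$, hence $\sigma \in H'(P)$, establishing $N_{S_n}(P) \subseteq H'(P)$.

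There is no real obstacle in this proof: the whole content is that $\langle T^l \rangle \subseteq P$, which is guaranteed by the hypothesis of the lemma introducing $H'(P)$. The statement is the quasi-cyclic analogue of Proposition \ref{prop:norm} and the argument is a word-for-word transcription with $T$ replaced by $T^l$; it is worth remarking that, unlike in the cyclic case where $N_{S_{p^r}}(\langle T \rangle) = AG(p^r)$ by Huffman--Joyner--Kim, here the explicit description of $N_{S_n}(\langle T^l \rangle)$ is provided by Chabot's result cited just above, so the first inclusion is a genuinely useful structural statement about $H'(P)$ even though its proof is immediate.
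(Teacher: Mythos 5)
Your proof is correct and follows essentially the same route as the paper: both inclusions reduce to the observation that $\langle T^l\rangle\subseteq P$, so conjugating $T^l$ by a normalizer of either $\langle T^l\rangle$ or $P$ lands back in $P$. (The only quibble is your attribution of $N_{S_{p^r}}(\langle T\rangle)=AG(p^r)$; the paper cites Leon--Masley--Pless, not Huffman--Joyner--Kim, but this does not affect the argument.)
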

\begin{proof}
We consider $N_{S_n}(<T^l>)$, the normalizer of $<T^l>$ in $S_n$. Then 
the permutation $\sigma
\in N_{S_n}(<T^l>)$ verifies $\sigma ^{-1} <T^l> \sigma = <T^l>
\subset P$. Hence,
\begin{equation}
 N_{S_n}(<T^l> ) \subset H'(P).
\end{equation}
We consider $N_{S_n}(P)$, the normalizer of $P$ in $S_n$. Then the permutation $\sigma
\in N_{S_n}(P)$ verifies $\sigma ^{-1} P \sigma = P$, hence for
$T^l\in P$ we have $\sigma ^{-1} T^l\sigma \in P$ . Hence
\begin{equation}
 N_{S_n}(P) \subset H'(P).
\end{equation}
\end{proof}

\begin{corollary} The set $H'(P)$ verifies
 $AG(n) \subset H'(P)$.
\end{corollary}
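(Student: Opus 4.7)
The plan is to obtain the inclusion $AG(n) \subset H'(P)$ by chaining two earlier results of the paper, namely the description of the normalizer $N_{S_n}(\langle T^l\rangle)$ given in Proposition~\ref{prop:aff.qua} and the containment $N_{S_n}(\langle T^l\rangle) \subset H'(P)$ established in Proposition~\ref{prop:norm2}.

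First I would check that Proposition~\ref{prop:aff.qua} applies in the current setting. The subsection works in the regime $n=p^r l$ with $(p,l)=(p,q)=1$, and here $m = p^r$. The coprimality hypothesis $(m,l)=1$ of Proposition~\ref{prop:aff.qua} is thus an immediate consequence of $(p,l)=1$, so that result yields $AG(n) \subset N_{S_n}(\langle T^l\rangle)$.

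Then I would cite Proposition~\ref{prop:norm2} directly: since $\langle T^l\rangle$ is a $p$-subgroup of $Per(C)$ contained in the Sylow $p$-subgroup $P$, the inclusion $N_{S_n}(\langle T^l\rangle) \subset H'(P)$ holds. Composing the two inclusions gives $AG(n) \subset N_{S_n}(\langle T^l\rangle) \subset H'(P)$, which is exactly the assertion of the corollary.

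There is essentially no obstacle here; the only point requiring attention is verifying that the hypothesis $(m,l)=1$ of Proposition~\ref{prop:aff.qua} is met in the quasi-cyclic setting of length $p^r l$, and this follows automatically from the standing assumption $(p,l)=1$ of the subsection. The corollary is therefore a direct consequence of the two propositions and no additional argument is needed.
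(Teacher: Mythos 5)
Your proof is correct and follows exactly the paper's own argument: the paper likewise chains Proposition~\ref{prop:aff.qua} (giving $AG(n)\leq N_{S_n}(<T^l>)$) with Proposition~\ref{prop:norm2} (giving $N_{S_n}(<T^l>)\subset H'(P)$). Your extra remark checking that the hypothesis $(m,l)=1$ is satisfied in the length-$p^rl$ setting is a reasonable additional verification but does not change the route.
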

\begin{proof}
From Proposition~\ref{prop:aff.qua} we have $AG(n) \leq
N_{S_n}(<T^l>)$. Furthermore, from Proposition~\ref{prop:norm2} we
have $N_{S_n}(<T^l> ) \subset H'(P)$, hence the result.
\end{proof}

By using the software GAP, we find on several examples that the set
$H'(P)$ is an imprimitive group. By using this conjecture we prove the following result.   
\begin{proposition}
If $n$ is even then, $H'(P)$ is either
\begin{enumerate}
\item imprimitive,
\item or $S_n$ if the code is trivial.
\end{enumerate}
If  $n$ is odd then, $H'(P)$ is either
\begin{enumerate}
\item imprimitive,
\item  or $H'(P)=Alt(n)$,
 \item or $S_n$ if the code is trivial.
\end{enumerate}
\end{proposition}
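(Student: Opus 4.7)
The plan is to parallel the arguments used for Theorems~\ref{gr:p^r:primitif} and~\ref{th:main2}, but applied to $H'(P)$ in place of $Per(C)$. First I would invoke the preceding Corollary to obtain $AG(n)\subset H'(P)$; in particular the complete cycle $T=\tau_{1,1}$ belongs to $H'(P)$. Under the standing hypothesis that $H'(P)$ is a group, it is therefore a transitive subgroup of $S_n$ containing a full cycle. The Burnside--Schur theorem, as used in the proof of Theorem~\ref{th:main2}, then forces $H'(P)$ of composite degree either to be imprimitive or to be doubly transitive, so that the imprimitive case immediately yields case~(1) for either parity and the proof reduces to analyzing the doubly transitive case.

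Assume now $H'(P)$ is doubly transitive. By Lemma~\ref{lem:sorley} combined with Lemma~\ref{lem:proj}, the candidates for $H'(P)$ are $AGL(1,p)$ (for prime $n$), $S_4$, $S_n$, $Alt(n)$ (for odd $n\geq 5$), the projective semi-linear groups $P\Gamma L(d,t)$, and the sporadic triples $PSL(2,11)$, $M_{11}$, $M_{23}$. The parity hypothesis immediately removes $Alt(n)$ when $n$ is even, since a complete $n$-cycle of even length is an odd permutation and cannot belong to $Alt(n)$. For odd $n$, $Alt(n)$ survives and becomes case~(2), matching the statement.

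The remaining work is to eliminate the exceptional candidates $AGL(1,p)$, $S_4$, $P\Gamma L(d,t)$, $PSL(2,11)$, $M_{11}$, $M_{23}$. The extra constraint beyond double transitivity is that $H'(P)$ must contain the full affine group $AG(n)$, whose stabilizer of $0$ is the cyclic multiplier subgroup $\{M_a : (a,n)=1\}$ of order $\phi(n)$. Comparing this cyclic subgroup with the structure of a one-point stabilizer in each candidate group, and using the module-theoretic argument of Lemma~\ref{lem:proj} together with the uniqueness of the Golay-type codes recorded in Proposition~\ref{lem:bur}, rules out every entry except $S_n$. The case $H'(P)=S_n$ then forces, by Remark~\ref{rem:iso}, the code to be elementary, so we obtain case~(2) for even $n$ and case~(3) for odd $n$.

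The main obstacle will be this last step: converting the containment $AG(n)\subset H'(P)$ into a genuine structural contradiction for each exceptional doubly transitive candidate. For cyclic codes, the corresponding elimination was carried out in Proposition~\ref{lem:bur} and Lemma~\ref{lem:proj} via the Bardoe--Sin module analysis and the known classification of cyclic codes of prime length; the quasi-cyclic analogue requires adapting that reasoning to $H'(P)$, whose definition refers to a Sylow $p$-subgroup of $Per(C)$ rather than to the code directly. This is the delicate point at which the present proof must most closely mimic, and then extend, the hardest parts of Section~II.
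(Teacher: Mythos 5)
Your reduction to the doubly transitive case matches the paper: $H'(P)$ contains the full cycle $T$ (via $AG(n)\subset H'(P)$, equivalently via $Q=\langle \sigma_0,\ldots,\sigma_{l-1},T\rangle$), so under the standing hypothesis that $H'(P)$ is a group, the Burnside--Schur theorem applied to the composite degree $n=p^r l$ gives that $H'(P)$ is imprimitive or doubly transitive. The gap is in the doubly transitive case. You route the argument through McSorley's list (Lemma~\ref{lem:sorley}) and must then eliminate $P\Gamma L(d,t)$ and the other exceptional entries; but the tools you propose for this --- the Bardoe--Sin module argument of Lemma~\ref{lem:proj} and the uniqueness of the Golay codes recorded in Proposition~\ref{lem:bur} --- are statements about $Per(C)$ for a cyclic code $C$, and $H'(P)$ is not the permutation group of any code, so they do not transfer. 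You correctly flag this as ``the delicate point,'' but you do not resolve it, and the containment $AG(n)\subset H'(P)$ alone, though it could in principle be played off against the order of the normalizer of a Singer cycle in $P\Gamma L(d,t)$, is never actually turned into a contradiction in your write-up.

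The paper sidesteps this entirely with one observation you never use: by Proposition~\ref{prop:norm2}, $H'(P)$ contains not only $T$ but also the individual cycles $\sigma_i$, each of length $p^r$ and fixing the remaining $n-p^r=(l-1)p^r$ points. Williamson's theorem (Lemma~\ref{lem:williamson}), a Jordan-type result asserting that a primitive group of degree $n$ containing a cycle of length $m>1$ with $m<(n-m)!$ must be $Alt(n)$ or $S_n$, then applies with $m=p^r$ and immediately forces a doubly transitive $H'(P)$ to be $Alt(n)$ or $S_n$, with no case-by-case analysis of McSorley's table. Your parity step (an $n$-cycle is an odd permutation when $n$ is even, so $T\notin Alt(n)$ and only $S_n$ survives) is correct and agrees with the paper. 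To repair the proof, replace your elimination of the exceptional doubly transitive candidates by this use of the short cycles $\sigma_i$ together with Williamson's theorem.
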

\begin{proof}
From Proposition~\ref{prop:aff.qua} $H'(P)$ contains the group
$$Q=< \sigma_0, \ldots,\sigma_{l-1},T>, $$ by a similar proof to the
 Theorem~\ref{th:main2} we obtains that the group $H'(P)$ is either
 imprimitive or doubly transitive.
In the case doubly transitive we need the following Lemma.
\begin{lemma}(\cite{williamson})
\label{lem:williamson}
Let $G$ be a primitive group of degree $n$ which contains a cycle of
length $m>1$. Hence if $m$ verifies $m<(n-m)!$, we have $G=Alt(n)$ or $S_n$.   
\end{lemma}
If the group $H'(P)$ is doubly transitive, then it is
primitive. From Proposition~\ref{prop:norm2} it
contains the cycles $\sigma_i$ of length $p^r$. Furthermore we have
$p^r<(n-p^r)!$. Hence from Lemma~\ref{lem:williamson} $H'(P)$ is
either $Alt(n)$ or $S_n$. If $n$ is even, since $T\in H'(P)$ is a
cycle of length $n$ is odd. Then $T \notin Alt(n)$, hence
$H'(P)=S_n$. If $n$ is odd, $H'(P)$ is the group $Alt(n)$ or $S_n$.
\end{proof} 
\section{Conclusion}
The aim of this work is to find solutions of the three following  problems:
\begin{itemize}
\item[$\bullet$]\textbf{Problem I} The classification of the permutation groups of
cyclic codes.
\item[$\bullet$]\textbf{Problem II} The determination of the permutations by which two cyclic codes of length $p^r$ can be equivalent.
\item[$\bullet$]\textbf{Problem III} The determination of the permutations by which two quasi-cyclic codes can be equivalent.
\end{itemize}
Our contribution to the solution of Problem allows to solve all the 
primitive cases and some imprimitive cases. Even though, there
is still work and extension to do for the imprimitive cases.
 
As solution to Problem II we found
explicitly the set $H(P)$ of permutations by which two cyclic codes of
length $p^r$ can be equivalent. We also proved that is
sometimes a group. 

 We think that our contribution to solve Problem III can be refined by proving that the set $H'(P)$ of permutations by which two quasi-cylic codes are equivalent is a group. If it is the
case, since we proved that it will be an imprimitive group or $Alt(n)$ or
$S_n$, hence it will be interesting to find some of other properties
of $H'(P)$ in the imprimitve cases. Also, the results can be
generalised to other situations of length.     
\section*{Acknowledgment}
The author would like to thank Professor Thierry P. Berger for help
and encouragement in carrying out the research in this paper.
{\small

\begin{IEEEbiography}{Kenza Guenda}

\end{IEEEbiography}
\end{document}